\documentclass[journal,comsoc]{IEEEtran}

\usepackage{graphicx,epsfig}
\usepackage[noadjust]{cite}
\usepackage{mcite}
\usepackage{amsfonts,helvet}
\usepackage{fancyhdr}
\usepackage{threeparttable}
\usepackage{epsf}
\usepackage{amsthm}
\usepackage{amsmath}
\usepackage{siunitx}
\usepackage{amssymb}
\usepackage{booktabs}

\usepackage{dsfont}
\usepackage{subfigure}
\usepackage{color}
\usepackage[linesnumbered,ruled, noend]{algorithm2e}
\usepackage{algpseudocode}
\usepackage{algcompatible}
\usepackage{enumerate}
\usepackage{ragged2e}
\usepackage{algorithm2e}
\usepackage{optidef}

\usepackage{wrapfig}
\usepackage{cancel}

\newtheorem{theorem}{Theorem}

\newtheorem{lemma}{Lemma}

\newtheorem{proposition}{Proposition}
\newtheorem{remark}{Remark}

\newtheorem{assumption}{Assumption}

\usepackage{eucal}

\usepackage{graphicx,epsfig}
\usepackage[noadjust]{cite}
\usepackage{mcite}
\usepackage{amsfonts,helvet}
\usepackage{fancyhdr}
\usepackage{threeparttable}
\usepackage{epsf}
\usepackage{amsthm}
\usepackage{amsmath}
\usepackage{siunitx}
\usepackage{amssymb}
\usepackage{booktabs}
\usepackage{dsfont}
\usepackage{subfigure}
\usepackage{color}
\usepackage[linesnumbered,ruled, noend]{algorithm2e}
\usepackage{algpseudocode}
\usepackage{algcompatible}
\usepackage{enumerate}
\usepackage{cancel}
\usepackage{multirow} 
\usepackage{threeparttable}

\usepackage{graphicx,epsfig}
\usepackage[noadjust]{cite}
\usepackage{mcite}
\usepackage{amsfonts,helvet}
\usepackage{fancyhdr}
\usepackage{threeparttable}
\usepackage{epsf,epsfig}
\usepackage{amsthm}
\usepackage{amsmath}
\usepackage{siunitx}
\usepackage{amssymb}
\usepackage{booktabs}
\usepackage{dsfont}
\usepackage{subfigure}
\usepackage{color}
\usepackage[linesnumbered,ruled,noend]{algorithm2e}
\usepackage{algpseudocode}
\usepackage{algcompatible}
\usepackage{enumerate}
\usepackage{cancel}
\usepackage{bbm}
\usepackage{graphicx,subfigure}
\usepackage{graphicx}
\usepackage{array}
\newcolumntype{P}[1]{>{\centering\arraybackslash}p{#1}}
\usepackage{eucal}

\def\ba{{\bf a}}

\def\be{{\bf e}}
\def\bff{{\bf f}}
\def\bg{{\bf g}}
\def\bh{{\bf h}}

\def\bq{{\bf q}}

\def\bs{{\bf s}}

\def\bu{{\bf u}}
\def\bv{{\bf v}}
\def\bw{{\bf w}}
\def\bx{{\bf x}}
\def\by{{\bf y}}
\def\bz{{\bf z}}

\def\bA{{\bf A}}
\def\bB{{\bf B}}
\def\bC{{\bf C}}

\def\bF{{\bf F}}
\def\bG{{\bf G}}
\def\bH{{\bf H}}
\def\bI{{\bf I}}

\def\bK{{\bf K}}

\def\bN{{\bf N}}

\def\bP{{\bf P}}

\def\bS{{\bf S}}

\def\bU{{\bf U}}
\def\bV{{\bf V}}
\def\bW{{\bf W}}

\def\cC{\mbox{$\mathcal{C}$}}

\def\cN{\mbox{$\mathcal{N}$}}

\def\bbC{\mbox{$\mathbb{C}$}}

\def\bbE{\mbox{$\mathbb{E}$}}

\begin{document}
\title{Scalable and Convergent Generalized Power Iteration Precoding for Massive MIMO Systems}

\author{Seunghyeong Yoo, Mintaek Oh, Jeonghun Park, Namyoon Lee, and Jinseok Choi
\thanks{S. Yoo, M. Oh, and J. Choi are with the Department of Electrical Engineering, Korea Advanced National Institute of Science and Technology (KAIST), Daejeon 34141, South Korea (e-mail: {\texttt{\{seunghyeong, ohmin, jinseok\}@kaist.ac.kr}}). 
J. Park is with the School of Electrical and Electronic Engineering, Yonsei University, Seoul 03722, South Korea (e-mail: {\texttt{jhpark@yonsei.ac.kr}}).
N. Lee is with the Department of Electrical Engineering, Pohang University of Science and Technology (POSTECH), Pohang 37673, South Korea (e-mail: {\texttt{nylee@postech.ac.kr}}).
    }
}

\maketitle
\setcounter{page}{1} 
\begin{abstract}
In massive multiple-input multiple-output (MIMO) systems, achieving high spectral efficiency (SE) often requires advanced precoding algorithms whose complexity scales rapidly with the number of antennas, limiting practical deployment.
In this paper, we develop a scalable and computationally efficient generalized power iteration precoding (GPIP) framework for massive MIMO systems under both perfect and imperfect channel state information at the transmitter (CSIT).
By exploiting the low-dimensional subspace property of optimal precoders, we reformulate the high-dimensional beamforming problem into a lower-dimensional weight optimization that scales with the number of users rather than antennas.
We further extend this framework to the imperfect CSIT scenario by showing that stationary solutions reside in a combined subspace spanned by the estimated channel and error covariance matrices, enabling a robust design via low-rank approximation.
To reduce computational cost, we leverage the Sherman–Morrison formula to simplify matrix inversions.
Moreover, interpreting the GPIP update as a projected preconditioned gradient ascent method, we establish convergence guarantees and develop a stable and monotonic algorithm using a backtracking line search.
Numerical results demonstrate that the proposed methods achieve the highest SE performance compared to state-of-the-art linear precoders with significantly reduced complexity and convergence, highlighting their suitability for large-scale MIMO systems.

\begin{IEEEkeywords}
    Low-complexity precoding, massive MIMO, and imperfect CSIT, and convergence analysis.
\end{IEEEkeywords}
\end{abstract}

\section{Introduction}
Massive multiple-input multiple-output (MIMO) is a key technology for next-generation networks.
By equipping a base station (BS) with a large antenna array, it spatially multiplexes numerous users on the same time-frequency resource, offering substantial improvements in spectral efficiency (SE) and energy efficiency (EE) \cite{marzetta2016book:massive, bjornson2017TWC:massive, zhang2020JSAC:massive}.
However, these benefits entail a significant computational cost.
As the number of antennas scales up, the computational complexity required for advanced beamforming grows rapidly, creating a critical bottleneck for practical deployment in cellular systems.
Consequently, the prohibitive processing burden remains a common and severe hurdle for cellular deployments, necessitating a unified and scalable solution.
\subsection{Related Works}
Beamformer design for MIMO systems has evolved from complex nonlinear methods to efficient linear and advanced optimization-based techniques.
Early research focused on nonlinear approaches like zero-forcing dirty paper coding (ZF-DPC) \cite{caire2003TIT:DPC}, which theoretically achieves Shannon capacity but entails prohibitive computational complexity for massive multi-user MIMO (MU-MIMO) systems.
This practical limitation necessitated low-complexity linear techniques, such as regularized ZF (RZF) \cite{peel2005TCOM:RZF}.
While computationally efficient, linear methods are inherently suboptimal for sum SE maximization due to the non-convex nature of the signal-to-interference-plus-noise ratio (SINR).

To address these limitations, various iterative optimization algorithms have been developed.
The weighted minimum mean square error (WMMSE) method \cite{christensen2008TWC:WMMSE} pioneered this domain by transforming the weighted sum SE maximization problem into a tractable quadratically constrained quadratic program via the minimization of the mean square error.
Building on this approach, \cite{yalcin2018TWC:WMMSE:analysis} proposed low-complexity algorithms for MU-MIMO systems that simultaneously transmit common and unicast data.
In \cite{shen2018TSP:FP}, optimization algorithms based on quadratic transform (QT) principles tackled the non-convex problem by reformulating it into a sequence of convex subproblems.
Similarly, the minorize-maximize (MM) framework \cite{lu2019TCOM:MM:massive} was leveraged to construct surrogate functions for precoder design.
In \cite{choi2019TWC:GPI}, the generalized power iteration precoding (GPIP) algorithm was proposed by deriving the first-order Karush--Kuhn--Tucker (KKT) optimality conditions for the sum SE maximization problem and reformulating them as generalized eigenvalue problems.

Recent advancements have further refined these frameworks.
The GPIP algorithms have demonstrated superior SE and EE across diverse scenarios, including quantized massive MIMO \cite{choi2022TWC:GPI:massive}, secure transmission with artificial noise \cite{choi2022TCOM:GPI:AN}, rate-splitting multiple access (RSMA) \cite{park2023TWC:GPI:RSMA}, the finite blocklength regime \cite{oh2023TWC:GPI:FBL}, and integrated sensing and communications (ISAC) with imperfect channel state information (CSI) \cite{choi2024TWC:GPI:ISAC}.
In the realm of fractional programming (FP), research has focused on robust precoding for imperfect CSI \cite{lin2022commlett:FP:massive} and new FP algorithms \cite{shen2024JSAC:fastFP:general} that are computationally more efficient than the conventional quadratic transform by eliminating large matrix inversions.
These new approaches were further extended to ISAC systems in \cite{chen2025TWC:fastFP:ISAC}.
A comprehensive review \cite{shen2025arXiv:FP} also detailed the theory and applications of QT, including these recently developed techniques \cite{shen2024JSAC:fastFP:general, chen2025TWC:fastFP:ISAC}.
Meanwhile, the MM framework was leveraged to design linear precoders for downlink (DL) massive MIMO systems \cite{zhang2022TVT:MM:massive}, robust precoding for reconfigurable intelligent surface (RIS)-aided systems experiencing channel aging \cite{he2025TVT:MM:RIS}, and distributed methods for network massive MIMO that restrict information exchange \cite{zhu2023WCNC:MM:massive}. 
Parallel efforts in WMMSE-based algorithms addressed robust precoding under limited feedback \cite{zhou2024TWC:WMMSE:robust} and channel uncertainty for massive MIMO \cite{shi2023TCOM:WMMSE:massive}, as well as for RIS-assisted systems \cite{chen2022TVT:WMMSE:RIS, choi2024TCOM:WMMSE:RIS}.

Despite these significant advances, a fundamental scalability bottleneck remains in the existing literature. Most optimization-based precoding frameworks \cite{christensen2008TWC:WMMSE, yalcin2018TWC:WMMSE:analysis, shen2018TSP:FP, lu2019TCOM:MM:massive, choi2019TWC:GPI} incur computational complexity that scales cubically with the number of BS antennas, rendering them impractical for large-scale massive MIMO deployments. 
While more recent QT- and WMMSE-based approaches have reduced the complexity to quadratic \cite{shen2024JSAC:fastFP:general} or even linear scaling \cite{chen2025TWC:fastFP:ISAC, zhao2023TSP:R-WMMSE}, such improvements critically rely on the restrictive assumption of perfect CSI at the transmitter (CSIT).

Notably, the reduced-WMMSE (R-WMMSE) framework \cite{zhao2023TSP:R-WMMSE} demonstrated that exploiting the intrinsic low-dimensional subspace structure of the optimal precoder can substantially improve computational scalability. However, despite the strong empirical performance and broad applicability of the GPIP method compared to several alternative approaches \cite{park2022rate, park2023TWC:GPI:RSMA}, a comparable structural advancement has not yet been achieved for GPIP. In its conventional form, GPIP still incurs computational complexity that scales cubically with the number of BS antennas and lacks rigorous convergence guarantees. As a result, although GPIP often provides superior performance and greater generality, it remains computationally prohibitive and theoretically underdeveloped for large-scale massive MIMO systems.

These observations motivate us to revisit the GPIP framework from both scalability and theoretical standpoints with extended applicability to the imperfect CSIT case. 
By bridging this gap, our work provides a unified precoding framework that preserves the performance advantages of GPIP, achieves practical scalability in massive MIMO systems, and ensures provable algorithmic convergence.

\subsection{Contributions}
In this work, we aim to develop a scalable GPIP formulation whose complexity scales with the number of users rather than the number of antennas under both perfect and imperfect CSIT. 
Moreover, we rigorously establish, for the first time, the convergence of the GPIP algorithm to stationary points, further proposing a scalable and convergent GPIP framework.
The main contributions are summarized as follows:
\begin{itemize}
    \item Under perfect CSIT, we develop a scalable GPIP (S-GPIP) framework by exploiting the low-dimensional subspace structure of optimal precoders. By reformulating the original high-dimensional beamforming problem into a reduced-dimensional weight optimization, the proposed method achieves computational complexity that is scalable with respect to the number of antennas.
    \item We extend the proposed framework to the imperfect CSIT scenario by proving that any nontrivial stationary solution resides in the combined subspace spanned by the estimated channel and error covariance matrices. Leveraging a low-rank approximation of the error covariance, we develop a robust and scalable S-GPIP algorithm with controlled complexity growth.
    \item We further reduce the per-iteration computational burden by exploiting the block-diagonal structure of the KKT matrices and applying the Sherman–Morrison formula. This reduces the algorithm complexity from $\CMcal{O}(K^4)$ to $\CMcal{O}(K^3)$ regarding the number of users $K$, enabling practical implementation in large-scale massive MIMO systems.
    \item We provide the first rigorous convergence analysis of the GPIP framework by interpreting the update rule as a projected preconditioned gradient ascent (PPGA) method. Under mild assumptions, we establish convergence to stationary points and propose a convergent S-GPIP algorithm to guarantee stable and monotonic ascent. 
\end{itemize}

In Section~\ref{sec:system}, we introduce the system model and formulate the sum SE maximization problem. 
In Section~\ref{sec:S-GPI}, we develop an S-GPIP algorithm under perfect CSIT.
In Section~\ref{sec:S-GPI-imperfect}, we extend the framework to imperfect CSIT. 
In Section~\ref{sec:conv_pg}, we provide convergence analysis and propose a convergent GPIP algorithm. 
Section~\ref{sec:Numerical_results} presents numerical results validating the performance, scalability, and convergence of the proposed algorithms.
Finally, Section~\ref{sec:conclusion} concludes the paper.

$\mathit{Notations}$: $a$ is a scalar, $\ba$ is a vector, and $\bA$ is a matrix. $\bI_{N}$ is an identity matrix of size $N \times N$, {$\mathbf{0}_{N}$ is a zero vector of size $N \times 1$}, $\bA \otimes \bB$ is the Kronecker product of two matrices $\bA$ and $\bB$, $\be_k$ is the $k$th basis column vector, whose $k$th entry is one and all other entries are zero, and $\cC\cN(\mu,\sigma^{2})$ is a complex Gaussian distribution with mean $\mu$ and variance $\sigma^{2}$. Superscripts $(\cdot)^{\sf T}$, $(\cdot)^{\sf H}$, and $(\cdot)^{-1}$ denote a matrix transpose, Hermitian, matrix inverse, respectively. The operation $\bbE[\cdot]$, $\mathrm{tr}(\cdot)$, $\mathrm{rank}(\bA)$, and $\mathrm{diag}({\bf a})$ are expectation, trace of a matrix, rank of a matrix, and a diagonal matrix with elements of $\bf a$.

\section{System Model and Problem Formulation}
\label{sec:system}
We consider a single-cell DL massive MIMO system, where the BS equipped with $N$ transmit antennas serves $K$ DL users ($N\gg K$), each having a single receive antenna.
Let $\bs \! = \! [s_{1}, ..., s_{K}]^{\sf T} \! \in \! \bbC^{K}$ denote a DL transmit symbol vector, where  $s_k$ is a symbol for user $k$ and each symbol follows an independent and identically distributed (IID) complex Gaussian distribution, i.e., $s_{k}\sim \cC\cN(0,1)$.

The transmit symbol vector is precoded by a digital precoder $\bF =[\bff_{1},...,\bff_{K}] \! \in \! \bbC^{N \times K}$, where $\bff_{k} \! \in \! \bbC^{N}$ is the $k$th column of $\bF$.
Then, the precoded symbol vector is written as
\begin{align}
    \label{eq:symbol_vector}
    \bx = \sqrt{P} \sum_{k=1}^{K} \bff_k s_k = \sqrt{P} \bF \bs,
\end{align}
where $P$ is the maximum transmit power at the BS.
Accordingly, a received signal at user $k$ is given by
\begin{align}
    \label{eq:received_sig}
    y_{k} = \sqrt{P} \bh^{\sf H}_{k} \bff_{k} s_{k} + \sqrt{P} \sum_{i=1,i\ne k}^{K}\bh^{\sf H}_{k} \bff_{i} s_{i} + n_{k}, 
\end{align}
where $\bh_{k} \! \in \! \bbC^{N}$ is the $k$th column of a channel matrix $\bH \! = \! [\bh_{1},..., \bh_{K}] \! \in \! \bbC^{N \times K}$ and $n_{k}$ is the additive white Gaussian noise (AWGN) for user $k$, which follows an IID complex Gaussian distribution $\cC\cN(0, \sigma^{2})$.
Furthermore, we model the channel vector for user $k$ as $\bh_{k}$ which has zero mean and covariance matrix of $\bK_{\bh_{k}}$.

Accordingly, we define the sum SE as a performance metric and formulate the corresponding optimization problem, where the closed-form SE expression for the $k$th user is defined as
\begin{align}
    \label{eq:rate}
    R_{k}(\bF) = \log_{2} \left( 1 + \frac{|\bh^{\sf H}_{k} \bff_{k}|^{2}}{\sum_{i=1, i\ne k}^{K} |\bh^{\sf H}_{k} \bff_{i}|^{2} + \frac{\sigma^{2}}{P}} \right).
\end{align}
From \eqref{eq:rate}, we formulate the following optimization problem:
\begin{subequations}
    \label{problem}
    \begin{align}
        \underset{\bF}{\text{maximize}}& \;\; \sum_{k=1}^{K} R_{k}(\bF) \label{eq:problem}\\
        \text{subject to}& \;\; \mathrm{tr} \left( \bF \bF^{\sf H} \right) \leq 1, \label{eq:constraint}
    \end{align}
\end{subequations}
where the inequality constraint in \eqref{eq:constraint} denotes the maximum transmit power budget of the BS.

Maximizing the sum SE in \eqref{problem} is a non-convex problem where finding the global optimum is intractable. 
While linear precoders (e.g., maximum-ratio transmission (MRT) and ZF) offer low computational complexity, they are highly suboptimal, necessitating advanced iterative algorithms.
However, in massive MIMO systems where $N \gg K$, direct optimization of the precoder $\bF$ faces severe complexity bottlenecks due to high dimensionality: {\it{(i)}} The complexity of state-of-the-art iterative algorithms (e.g., WMMSE, GPIP) typically scales cubically with $N$, i.e., $\CMcal{O}(N^3)$ \cite{christensen2008TWC:WMMSE, choi2019TWC:GPI}, making them prohibitive for practical implementation.
{\it{(ii)}} The substantial processing requirements translate to excessive energy consumption and impose strict hardware constraints on large-scale arrays.

Instead of optimizing the high-dimensional precoder $\bF$ directly, we propose to reformulate the problem within a significantly lower-dimensional subspace.
We exploit the critical insight that optimal precoding vectors must lie within the subspace spanned by the columns of the channel matrix $\bH$ \cite{zhao2023TSP:R-WMMSE}.
Based on this projection, we propose the S-GPIP method that is scalable to the number of BS antennas for both perfect and imperfect CSIT scenarios.
To guarantee convergence, we further perform convergence analysis and propose a convergent S-GPIP algorithm.

\section{Proposed Scalable Precoding}
\label{sec:S-GPI}

\subsection{Low-dimensional Projection}
We first design the precoding algorithm for the perfect CSIT scenario.
Our design is based on the following proposition from \cite{zhao2023TSP:R-WMMSE}, which characterizes the nontrivial stationary points of the problem in \eqref{problem}:
\begin{proposition}[\cite{zhao2023TSP:R-WMMSE}]
    \label{proposition1}
    Any nontrivial stationary point $\bff_{k}$ of the problem in \eqref{problem} must lie in the range space of the DL channel $\bH$, i.e., $\bff_{k} = \bH \bw_{k}$, with a unique vector $\bw_{k} \in \bbC^{K}$ for $k \in \{1,...,K\}$.
\end{proposition}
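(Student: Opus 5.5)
We would prove Proposition~\ref{proposition1} from the first-order KKT conditions of \eqref{problem}. Write the objective as $f(\bF)=\sum_{k}R_k(\bF)$. The constraint $\mathrm{tr}(\bF\bF^{\sf H})\le 1$ is convex and satisfies Slater's condition, so every stationary point admits a multiplier $\mu\ge 0$ with
\begin{align}
\nabla_{\bff_k^*} f(\bF) = \mu\, \bff_k, \quad k=1,\dots,K, \nonumber
\end{align}
together with complementary slackness $\mu(1-\mathrm{tr}(\bF\bF^{\sf H}))=0$.

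\textbf{Step 1: compute the Wirtinger gradient.} Let $I_j=\sum_{i\ne j}|\bh_j^{\sf H}\bff_i|^2+\sigma^2/P$ and $T_j=I_j+|\bh_j^{\sf H}\bff_j|^2$, so that $R_j=\log_2 T_j-\log_2 I_j$. Since $\bff_k$ enters $T_j$ through $|\bh_j^{\sf H}\bff_k|^2$ for every $j$, and enters $I_j$ only for $j\ne k$, differentiating gives
\begin{align}
\ln 2\,\nabla_{\bff_k^*} f = \sum_{j=1}^{K}\frac{\bh_j\bh_j^{\sf H}\bff_k}{T_j} - \sum_{j\ne k}\frac{\bh_j\bh_j^{\sf H}\bff_k}{I_j}. \nonumber
\end{align}
Every term is a scalar multiple of some $\bh_j$, so the gradient equals $\bH\bm{\alpha}_k$ for a coefficient vector $\bm{\alpha}_k\in\bbC^K$ whose entries $(\bh_j^{\sf H}\bff_k)(1/T_j-\mathbb{1}_{j\ne k}/I_j)/\ln 2$ depend on $\bF$.

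\textbf{Step 2: show $\mu>0$ at a nontrivial stationary point.} This is the main obstacle. If $\mu=0$, then $\bH\bm{\alpha}_k=\mathbf{0}$ for all $k$, and the range-space conclusion could fail. We would rule this out with a scaling argument. Consider $\bF\mapsto t\bF$ for $t>0$. The directional derivative of $f$ along $\bF$ equals $\sum_k 2\,\mathrm{Re}\{\bff_k^{\sf H}\nabla_{\bff_k^*} f\}$, which the KKT condition turns into $2\mu\,\mathrm{tr}(\bF\bF^{\sf H})$. Scaling up the precoder reduces the relative noise $\sigma^2/(t^2P)$, and under $\sigma^2>0$ this strictly increases every SINR that is not zero. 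Hence the derivative is strictly positive whenever some $\bh_k^{\sf H}\bff_k\neq0$. This is what we take ``nontrivial'' to mean; we would say so explicitly, since $\bF=\mathbf{0}$ and points with all $\bh_k^{\sf H}\bff_k=0$ are excluded. It follows that $\mu\,\mathrm{tr}(\bF\bF^{\sf H})>0$, so $\mu>0$ and the power constraint is active. To verify the positivity, we would check that $t\,\partial_t\log(1+\mathrm{SINR}_k(t))$ equals a positive multiple of $\sigma^2/P$ times $|\bh_k^{\sf H}\bff_k|^2$ divided by positive denominators.

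\textbf{Step 3: conclude.} Since $\mu>0$, we get $\bff_k=\mu^{-1}\nabla_{\bff_k^*} f=\bH(\mu^{-1}\bm{\alpha}_k)$, so $\bff_k$ lies in the range of $\bH$ with $\bw_k=\mu^{-1}\bm{\alpha}_k$. For uniqueness, we assume (as is generic when $N\gg K$) that $\bH$ has full column rank $K$. Then $\bH$ is injective, and $\bw_k=(\bH^{\sf H}\bH)^{-1}\bH^{\sf H}\bff_k$ is the only vector with $\bff_k=\bH\bw_k$. If $\bH$ were rank-deficient, existence would still hold but uniqueness would not, so the full-rank hypothesis should be stated.
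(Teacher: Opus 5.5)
Your proof is correct and follows essentially the same route as the paper. The paper cites Proposition~\ref{proposition1} from \cite{zhao2023TSP:R-WMMSE} without proving it, but its Appendix~\ref{app:impCSIT} proof of Proposition~\ref{proposition2} (which reduces to this statement when $\boldsymbol{\Phi}_k=\mathbf{0}$) proceeds the same way: it writes the KKT conditions, multiplies the stationarity condition by $\bff_k^{\sf H}$ and sums over $k$ (your radial derivative along $\bF$) to show the multiplier is positive unless every $\bh_k^{\sf H}\bff_k=0$, and then reads off $\bff_k$ as $\lambda^{-1}$ times a combination of channel vectors. The differences are minor: you argue directly rather than by contradiction, and you state the full-column-rank assumption on $\bH$ needed for uniqueness of $\bw_k$, which the paper leaves implicit.
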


From Proposition~\ref{proposition1}, we observe that each digital precoding vector $\bff_{k}$ for the $k$th user can be expressed as a weighted linear combination of channel vectors.
Motivated by this finding, we design a digital precoder as a function of the channel, i.e., 
\begin{align}
    \label{eq:opt_precoder}
    \bF = \bH \bW,
\end{align}
where $\bW = [\bw_{1}, ..., \bw_{K}] \in \bbC^{K \times K}$ is a unique weight matrix for the channel matrix and each column $\bw_{k}$ corresponds to the $k$th user.
Although the optimization problem in \eqref{problem} imposes an inequality power constraint, the sum SE generally increases with the total transmit power.
Thus, we assume that the BS operates at the maximum transmit power, implying that the power constraint is satisfied with equality, i.e., $\text{tr}(\bF\bF^{\sf H}) = 1$. 
To explicitly reflect this assumption and facilitate the subsequent problem reformulation, we normalize the precoder in \eqref{eq:opt_precoder} as
\begin{align}
    \label{eq:normalization}
    \frac{\bF}{\|\bF\|_{\sf F}} = \frac{\bH \bW}{\sqrt{\mathrm{tr} \left(\bH \bW \bW^{\sf H} \bH^{\sf H} \right)}}.
\end{align}
Based on Proposition~\ref{proposition1} and \eqref{eq:normalization}, we reformulate the SINR of the $k$th user in \eqref{eq:rate}:
\begin{align}
    \label{eq:re_SINR}
    \gamma^{\sf Ref}_{k} = \frac{|\bh^{\sf H}_{k} \bH \bw_{k}|^{2}}{\sum_{i=1, i\ne k}^{K} |\bh^{\sf H}_{k} \bH \bw_{i}|^{2} + \frac{\sigma^{2}}{P} \mathrm{tr} \left(\bH \bW \bW^{\sf H} \bH^{\sf H} \right) } .
\end{align}

Accordingly, under the assumption of $\text{tr}(\bF\bF^{\sf H}) = 1$, the optimization problem of \eqref{problem} can be reformulated as
\begin{subequations}
    \label{re_problem1}
    \begin{align}
        \underset{\bW}{\text{maximize}}&\;\; \sum_{k=1}^{K} R^{\sf Ref}_k(\bW) \label{eq:re_problem1}\\
        \text{subject to}&\;\; \mathrm{tr} \left( \bH \bW \bW^{\sf H} \bH^{\sf H} \right) = 1,
        \label{eq:re_constraint1}
    \end{align}
\end{subequations}
where $R^{\sf Ref}_k(\bW) = \log_2{ ( 1 + \gamma^{\sf Ref}_{k} ) }$.
Leveraging Proposition~\ref{proposition1} significantly reduces the feasible search domain for the optimal solution; however, the reformulated problem of \eqref{re_problem1} remains inherently non-convex.
To address this challenge, we employ a new objective function of \eqref{eq:re_problem1} along with a tailored solution strategy, which enables convergence to a superior stationary point despite the non-convex nature of the problem.

\subsection{Reformulation and Algorithm}
We first reformulate the SE of the objective function in \eqref{eq:re_problem1} to adopt the GPIP algorithm \cite{choi2019TWC:GPI}.
Let $\bN_{k}$ be
\begin{align}
    \label{eq:Nk}
    \bN_{k} = \bH^{\sf H} \bh_{k} \bh_{k}^{\sf H} \bH \in \bbC^{K \times K}.
\end{align} 
Next, we define a tall weight vector by stacking $\bw_{k}$ as
\begin{align}
    \label{eq:stacked_weight_vec}
    \bar{\bw} = \left[\bw_{1}^{\sf T}, ..., \bw_{K}^{\sf T} \right]^{\sf T} \in \bbC^{K^{2}}.
\end{align}
Based on \eqref{eq:stacked_weight_vec}, we rewrite a trace term of the normalization in \eqref{eq:re_SINR} to be a tractable form to the GPIP scheme, i.e.,
\begin{align}
    \label{eq:re_normalization}
    \mathrm{tr} \left(\bH \bW \bW^{\sf H} \bH^{\sf H} \right) = \bar{\bw}^{\sf H} \left(\bI_{K} \otimes \bH^{\sf H} \bH \right) \bar{\bw}.
\end{align}
Using \eqref{eq:Nk}, \eqref{eq:stacked_weight_vec}, and \eqref{eq:re_normalization}, we represent the SE expression of the $k$th user in \eqref{eq:re_problem1} as a Rayleigh quotient form:
\begin{align}
    \label{eq:rayleigh_form}
    R^{\sf Ref}_{k}(\bar{\bw}) = \mathrm{log}_{2} \left(\frac{\bar{\bw}^{\sf H} \bA_{k} \bar{\bw}}{\bar{\bw}^{\sf H} \bB_{k} \bar{\bw}} \right),
\end{align}
where $\bA_{k}$ and $\bB_{k}$ are obtained as
\begin{align}
    \label{eq:Ak}
    &\bA_{k} = \bI_{K} \otimes \bN_{k} + \frac{\sigma^{2}}{P} \left(\bI_{K} \otimes \bH^{\sf H} \bH \right) \in \bbC^{K^{2} \times K^{2}},\\
    \label{eq:Bk}
    &\bB_{k} = \bA_{k} - \be_k \be_k^{\sf H} \otimes \bN_{k} \in \bbC^{K^{2} \times K^{2}}.
\end{align}

From \eqref{eq:rayleigh_form}, the optimization problem in \eqref{re_problem1} is expressed as
\begin{align}
    \label{eq:re_problem}
    \underset{\bar{\bw}}{\text{maximize}}&\;\; \sum_{k=1}^{K} R^{\sf Ref}_{k}(\bar{\bw}).
\end{align} 
We remark that since the objective function \label{re_problem} is scale-invariant with respect to ${\bf \bar{w}}$, we omit the power constraint at this point.
Note that the reformulated problem \eqref{eq:re_problem} remains non-convex, making the search for a global optimum NP-hard.
Therefore, we derive the first-order optimality condition of \eqref{eq:re_problem} to identify the stationary points of the problem, as stated in the following lemma:

\begin{lemma}
    \label{Lemma1}
    The first-order optimality condition of \eqref{eq:re_problem} is represented in the generalized eigenvalue problem as
    \begin{align}
        \bB_{\sf KKT}^{-1}(\bar{\bw})\bA_{\sf KKT}(\bar{\bw}) = \lambda(\bar{\bw})\bar{\bw},
        \label{eq:NEPv}
    \end{align}
    where
    \begin{align}
        \bA_{\sf KKT}(\bar{\bw}) &= \sum_{k=1}^{K}\left(\frac{\bA_{k}}{\bar{\bw}^{\sf H} \bA_{k} \bar{\bw}} \right) \lambda_{\sf num}(\bar{\bw}),
        \label{eq:A_KKT}
        \\
        \bB_{\sf KKT}(\bar{\bw}) &= \sum_{k=1}^{K}\left(\frac{\bB_{k}}{\bar{\bw}^{\sf H} \bB_{k} \bar{\bw}} \right) \lambda_{\sf den}(\bar{\bw}),
        \label{eq:B_KKT}
        \\
        \lambda(\bar{\bw})  &= \prod_{k=1}^{K} \left(\frac{\bar{\bw}^{\sf H} \bA_{k} \bar{\bw}}{\bar{\bw}^{\sf H} \bB_{k} \bar{\bw}} \right).
        \label{eq:eig_val}
    \end{align}
    The numerator $\lambda_{\sf num}(\bar{\bw})$ and denominator $\lambda_{\sf den}(\bar{\bw})$ of eigenvalue are any functions that satisfy $\lambda(\bar{\bw})=\lambda_{\sf num}(\bar{\bw})/\lambda_{\sf den}(\bar{\bw})$.
    \begin{proof}
        See Appendix~\ref{app:Lemma1}.
    \end{proof}
\end{lemma}

\begin{algorithm}[!t]
\caption{Scalable GPIP under Perfect CSIT}\label{alg:one}
\bf{initialize}: $\bar{\bw}^{(0)} \gets$ \rm{RZF}\\
\rm{Set the iteration count} $t=1.$\\
\While{$ \|\bar{\bw}^{(t)} - \bar{\bw}^{(t-1)}\| > \varepsilon_{\bar{\bw}}$ \rm{and} $t \leq t_{\sf max}$}{
\rm{Build matrix} $\bA_{\sf KKT} (\bar{\bw}^{(t-1)})$ according to \eqref{eq:A_KKT}.\\
\rm{Build matrix} $\bB_{\sf KKT} (\bar{\bw}^{(t-1)})$ according to \eqref{eq:B_KKT}.\\
\rm{Compute} $\bar{\bw}^{(t)}$ = $\bB_{\sf KKT}^{-1}  (\bar{\bw}^{(t-1)}) \bA_{\sf KKT} (\bar{\bw}^{(t-1)}) \bar{\bw}^{(t-1)}.$\\
Normalize $\bar{\bw}^{(t)} = {\bar{\bw}^{(t)}}/{\|\bar{\bw}^{(t)}\|}.$\\
$t \gets t+1.$\\
}
$\bar{\bw}^{\star} \gets \bar{\bw}^{(t)}.$\\
\rm{Build weight matrix} $\bW^{\star} \gets \bar{\bw}^{\star}$.\\
\rm{Compute} $\bF = \bH \bW^{\star}$.\\
\rm{Normalize} $\bF^{\star} = \bF/\|\bF\|_{\sf{F}}$.\\
\Return $\bF^{\star}$.
\end{algorithm}
The proposed S-GPIP method is summarized in Algorithm~\ref{alg:one}.
The key steps are as follows:
Using a conventional RZF precoder, we first initialize the stacked weight vectors $\bar{\bw}^{(0)}$.
At each iteration $t$, we build the matrices $\bA_{\sf KKT} (\bar{\bw}^{(t-1)} )$ and $\bB_{\sf KKT} (\bar{\bw}^{(t-1)})$ according to \eqref{eq:A_KKT} and \eqref{eq:B_KKT}.
Then, we compute $\bar{\bw}^{(t)}$ as $\bar{\bw}^{(t)} \! = \! \bB_{\sf KKT}^{-1} (\bar{\bw}^{(t-1)} ) \bA_{\sf KKT} (\bar{\bw}^{(t-1)}) \bar{\bw}^{(t-1)}$ and normalize $\bar{\bw}^{(t)} \! = \! \bar{\bw}^{(t)}/ \|\bar{\bw}^{(t)}\|$.
We repeat this process until the stopping criterion $\|\bar{\bw}^{(t)} \! - \! \bar{\bw}^{(t-1)}\| \! \leq \! \varepsilon_{\bar{\bw}}$ is met or until the iteration count $t$ reaches the maximum count $t_{\sf max}$, where $\epsilon_{\bar{\bw}} \! > \! 0$ is a tolerance level.
Using $\bar{\bw}^{\star}$, we build a optimized weight matrix $\bW^{\star}$.
Finally, we compute $\bF = \bH \bW^{\star}$, and then normalize the precoder $\bF^{\star} = \bF/{\|\bF\|_{\sf F}}$.


\subsection{Complexity Reduction}
\label{subsec:Complexity_Analysis}

It is important to convert the original sum SE maximization problem, which involves optimizing the precoder $\bF$, into a reformulated problem of optimizing the unique weight matrix $\bW$.
Based on this transformation, Algorithm~\ref{alg:one} provides a low-dimensional optimization framework, which significantly reduces the computational complexity.
The computational complexity of Algorithm~\ref{alg:one} is primarily determined by the matrix inversion $\bB_{\sf KKT}^{-1}(\bar{\bw}^{(t-1)})$, which is an inherent property of GPIP methods.
As in \cite{choi2019TWC:GPI}, without the low-dimensional framework, the matrix $\bB_{\sf KKT}$ is typically an $NK \times NK$ block-diagonal matrix, resulting in an inversion complexity of $\CMcal{O}(N^{3}K)$ per-iteration.
However, under our framework, $\bB_{\sf KKT}$ is a $K^{2} \times K^{2}$ block-diagonal matrix, thereby reducing the per-iteration inversion complexity to $\CMcal{O}(K^{4})$.
The total complexity, including $T$ iterations and the $\CMcal{O}(NK^2)$ complexity for pre- and post-processing (e.g., $\bH^{\sf H} \bH$ and $\bF = \bH \bW^{\star}$), is $\CMcal{O}(TK^4 + NK^2)$ instead of $\CMcal{O}(TN^3K)$.
Considering $N\gg K$, this significant reduction in complexity makes the proposed S-GPIP highly advantageous for deployment in large-scale MIMO systems.

We can further reduce the complexity from $\CMcal{O}(TK^4 + NK^2)$ to $\CMcal{O}(TK^3 + NK^2)$.
To this end, we first demonstrate how the Sherman--Morrison formula \cite{golub2013book:Sherman} facilitates efficient matrix inversion.
Assuming $\lambda_{\sf den}(\bar{\bw}) = \prod_{i=1}^{K} ( {\bar{\bw}^{\sf H} \bB_{i} \bar{\bw}} )$, ${\bB}_{\sf KKT}(\bar{\bw})$ in \eqref{eq:B_KKT} is written as
\begin{align}
    \bB_{\sf KKT}(\bar{\bw}) = \sum_{k=1}^{K}\left(\frac{\bB_{k}}{\bar{\bw}^{\sf H} \bB_{k} \bar{\bw}} \right) \prod_{i=1}^{K} \left( {\bar{\bw}^{\sf H} \bB_{i} \bar{\bw}} \right).
\end{align}
Using \eqref{eq:Bk}, we re-express ${\bB}_{\sf KKT}(\bar{\bw})$ as
\begin{align}
    \label{eq:B_KKT_c_k}
    \bB_{\sf KKT}(\bar{\bw}) = \sum_{k=1}^{K} c_{k} \bB_{k},
\end{align}
where $c_{k} = ( {\bar{\bw}^{\sf H} \bB_{k} \bar{\bw}} )^{-1} \prod_{i=1}^{K} ( {\bar{\bw}^{\sf H} \bB_{i} \bar{\bw}} )$ is a scalar coefficient.
From \eqref{eq:B_KKT_c_k}, we can reformulate ${\bB}_{\sf KKT}(\bar{\bw})$ as a block-diagonal representation, i.e.,
\begin{align}
    \label{eq:B_KKT_rec}
    {\bB}_{\sf KKT} = \mathrm{diag} \left( \tilde{\bB}^{(K)}_{1}, ..., \tilde{\bB}^{(K)}_{K} \right).
\end{align}
Here, each sub-matrix $\tilde{\bB}^{(K)}_{k} \in \bbC^{K \times K}$ is defined in relation to a common representative matrix $\tilde{\bB}^{(K)}$:
\begin{align}
    \label{eq:sub_B_matrix}
    \tilde{\bB}^{(K)}_{k} &= \tilde{\bB}^{(K)} - c_{k} \bH^{\sf H} \bh_{k} \bh^{\sf H}_{k} \bH,
\end{align}
where $\tilde{\bB}^{(K)} \! = \! \sum^{K}_{i=1} c_{i}(\bH^{\sf H} \bh_{i} \bh^{\sf H}_{i} \bH + \frac{\sigma^2}{P} \bH^{\sf H}\bH)$.
Given that ${\bB}_{\sf KKT}$ is block-diagonal, its inverse ${\bB}_{\sf KKT}^{-1}$ requires inverting each of the $K$ sub-matrices $\tilde{\bB}^{(K)}_{k}$.
Inverting all $K$ sub-matrices would cost $\CMcal{O}(K^4)$, which matches the derived complexity of the S-GPIP algorithm.

To reduce this complexity, our strategy involves two steps: first, we compute the inverse of the common matrix, $(\tilde{\bB}^{(K)})^{-1}$, which is a standard matrix inversion with $\CMcal{O}(K^3)$ complexity.
Second, we exploit the rank-one update structure defined in \eqref{eq:sub_B_matrix}.
This structure allows us to apply the Sherman--Morrison formula \cite{golub2013book:Sherman} to efficiently compute the inverse of each $k$th sub-matrix:
\begin{align}
    \label{eq:inv_B_KKT_SM_formula}
    \left( \tilde{\bB}^{(K)}_{k} \right)^{-1} &= \left( \tilde{\bB}^{(K)} - c_{k} \bH^{\sf H} \bh_{k} \bh^{\sf H}_{k} \bH \right)^{-1} \\
    &= \left( \tilde{\bB}^{(K)} \right)^{-1} + \frac{ c_{k} \left( \tilde{\bB}^{(K)} \right)^{-1} \bH^{\sf H} \bh_{k} \bh^{\sf H}_{k} \bH \left( \tilde{\bB}^{(K)} \right)^{-1}}{1 - c_{k} \bh^{\sf H}_{k} \bH \left( \tilde{\bB}^{(K)} \right)^{-1} \bH^{\sf H} \bh_{k}}. \nonumber
\end{align}
Accordingly, we calculate the inverse of $\bB_{\sf KKT}$ as
\begin{align}
    \label{eq:inv_B_KKT}
    {\bB}_{\sf KKT}^{-1} = \mathrm{diag} \left( \left( \tilde{\bB}^{(K)}_{1} \right)^{-1}, ..., \left( \tilde{\bB}^{(K)}_{K} \right)^{-1} \right).
\end{align}
This process involves one initial $\CMcal{O}(K^3)$ inversion to find $(\tilde{\bB}^{(K)})^{-1}$, followed by $K$ applications of the Sherman--Morrison formula in \eqref{eq:inv_B_KKT_SM_formula}.
Since each Sherman--Morrison application involves only matrix-vector products, it costs only $\CMcal{O}(K^2)$.
Therefore, the total per-iteration complexity to compute all $K$ inverses of $\tilde{\bB}^{(K)}_{k}$ in \eqref{eq:inv_B_KKT} is reduced to $\CMcal{O}(K^3 + K \cdot K^2) = \CMcal{O}(K^3)$.
This optimization reduces the total computational complexity of the proposed S-GPIP from $\CMcal{O}(TK^4 + NK^2)$ to $\CMcal{O}(TK^3 + NK^2)$.

\section{Scalable Precoding with Imperfect CSIT}
\label{sec:S-GPI-imperfect}
We now proceed to design the precoding algorithm for the imperfect CSIT scenario.
The estimate of the channel $\bh_{k}\in \bbC^{N}$ is modeled as
\begin{align}
    \label{eq:channel_error}
    {\hat{\bh}}_{k} = \bh_{k} - {\boldsymbol{\phi}}_{k},
\end{align}
where ${\hat{\bh}}_{k}$ is the $k$th column of ${\hat{\bH}} = [{\hat{\bh}}_{1}, ..., {\hat{\bh}}_{K}] \in \bbC^{N \times K}$ and ${\boldsymbol{\phi}}_{k} \in \bbC^{N}$ is a channel estimation error vector.
Specifically, we assume that the BS has access to the estimated channel $\hat{\bf h}_k$ and the error covariance matrix ${\bf{\Phi}}_{k} = \bbE \left[{\boldsymbol{\phi}}_{k} {\boldsymbol{\phi}}^{\sf H}_{k} \right]$, $\forall k$.

Since the BS does not have the perfect CSIT, it is unable to accurately obtain the SE expression in \eqref{eq:rate}.
To address this uncertainty, we adopt the average SE metric \cite{joudeh2016TCOM:WMMSE-SAA}, which evaluates the expected SE over the CSIT error distribution for a given channel estimate.
Thus, based on \eqref{eq:rate}, the average SE of the $k$th user is computed and lower bounded as
\begin{align}
    \nonumber
    &\mathbb{E}[{R}_{k}(\bF)] = \bbE \! \left[ \bbE \! \left[ \log_{2} \! \left(1 + \frac{|\bh^{\sf H}_{k} \bff_{k}|^{2}}{\sum_{i=1, i\ne k}^{K} \! |\bh^{\sf H}_{k} \bff_{i}|^{2} + \frac{\sigma^{2}}{P}} \right) \! \Bigg{|} \hat{\bh}_{k} \right] \right] 
    \\\label{eq:average_SE}
    &\overset{(a)}{\geq} \bbE \! \left[ \log_{2} \! \left(1 + \frac{|\hat{\bh}^{\sf H}_{k} \bff_{k}|^{2}}{\sum_{i=1, i\ne k}^{K} |\hat{\bh}^{\sf H}_{k} \bff_{i}|^{2} \! + \! \sum_{i=1}^{K} \bff^{\sf H}_{i} {\bf{\Phi}}_{k} \bff_{i} + \frac{\sigma^{2}}{P}} \right) \! \right],
\end{align}
where following the generalized mutual information principles \cite{yoo2006TIT:GMI, lapidoth2002TIT:GMI}, $(a)$ is derived by modeling all error terms as independent Gaussian noise, after which Jensen's inequality is invoked to establish the final expression.

From \eqref{eq:average_SE}, we obtain the lower bound problem for the ergodic sum SE problem:
\begin{align}
    \label{eq:inequality}
    \underset{\bF}{\text{maximize}} \sum_{k=1}^{K} \mathbb{E}[{R}_{k}(\bF)] \geq \underset{\bF}{\text{maximize}} \sum_{k=1}^{K} \mathbb{E}[\bar{R}^{\sf lb}_{k}(\bF)],
\end{align}
where
\begin{align}
    \label{eq:rate_lb}
    \bar{R}^{\sf lb}_{k}(\bF) \!=\! \log_{2} \!\left(\! 1 + \frac{|\hat{\bh}^{\sf H}_{k} \bff_{k}|^{2}}{\sum_{i=1, i\ne k}^{K} |\hat{\bh}^{\sf H}_{k} \bff_{i}|^{2} + \! \sum_{i=1}^{K} \bff^{\sf H}_{i} {\bf{\Phi}}_{k} \bff_{i} + \frac{\sigma^{2}}{P}} \! \right).
\end{align}
Since the channel estimate $\hat {\bf h}_k$ is available, we solve the following problem without averaging over the estimated channel to achieve better optimization by exploiting the channel estimates for each fading block:
\begin{subequations}
    \label{avg_problem}
    \begin{align}
        \underset{\bF}{\text{maximize}}&\;\; \sum_{k=1}^{K} \bar{R}^{\sf lb}_{k}(\bF) \label{eq:avg_problem}\\
        \text{subject to}&\;\; \mathrm{tr} \left( \bF \bF^{\sf H} \right) \leq 1. \label{eq:avg_constraint}
    \end{align}
\end{subequations}
In other words, instead of directly solving \eqref{problem} using the channel estimate $\hat {\bf h}_k$,  we solve \eqref{avg_problem}  to achieve robust optimization by explicitly accounting for the channel estimation error.

\subsection{Low-dimensional Projection}
As in Proposition~\ref{proposition1}, we derive the following proposition for nontrivial stationary points of the problem in \eqref{avg_problem} under imperfect CSIT:
\begin{proposition}
    \label{proposition2}
    A nontrivial stationary point $\bff_{k}$ of the problem in \eqref{avg_problem} must lie in the subspace spanned by the columns of the estimated channel $\hat{\bH}$ and channel error covariance matrices ${{\boldsymbol{\Phi}}} \! = \! \left[{\boldsymbol{\Phi}}_{1},...,{\boldsymbol{\Phi}}_{K} \right]$ for $k \in \{1,...,K\}$.
    Thus, $\bff_{k}$ can be expressed as $\bff_{k} \! = \! \bG \bv_{k}$, where $\bG \! = \! \left[\hat{\bH}, {{\boldsymbol{\Phi}}} \right] \! \in \! \bbC^{N \times (K + NK)}$ is the augmented matrix and $\bv_{k} \! \in \!  \bbC^{(K + NK)}$ is a corresponding unique vector.
\end{proposition}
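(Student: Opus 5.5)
We will prove Proposition~\ref{proposition2} by writing down the KKT conditions of \eqref{avg_problem} and reading off the range of the resulting stationarity equation, as in the perfect CSIT case of Proposition~\ref{proposition1}. Write the objective as $\sum_{k}\log_2(S_k/I_k)$. Here $S_k(\bF)=\sum_{i=1}^{K}|\hat{\bh}_k^{\sf H}\bff_i|^2+\sum_{i=1}^{K}\bff_i^{\sf H}{\boldsymbol{\Phi}}_k\bff_i+\frac{\sigma^2}{P}$ is the total received power plus noise. The term $I_k(\bF)$ is $S_k$ minus $|\hat{\bh}_k^{\sf H}\bff_k|^2$. Both are real quadratic forms in $\bF$, so we take Wirtinger derivatives with respect to $\bff_j^{*}$. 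This gives
\begin{align}
\frac{\partial S_k}{\partial \bff_j^{*}}&=\hat{\bh}_k\hat{\bh}_k^{\sf H}\bff_j+{\boldsymbol{\Phi}}_k\bff_j, \nonumber\\
\frac{\partial I_k}{\partial \bff_j^{*}}&=(1-\delta_{kj})\hat{\bh}_k\hat{\bh}_k^{\sf H}\bff_j+{\boldsymbol{\Phi}}_k\bff_j. \nonumber
\end{align}

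Next we introduce a multiplier $\mu\ge0$ for the power constraint \eqref{avg_constraint}. The stationarity condition for each $j$ then reads
\begin{align}
\mu\,\bff_j=\sum_{k=1}^{K}\hat{\bh}_k\,\alpha_{kj}+\sum_{k=1}^{K}\beta_k{\boldsymbol{\Phi}}_k\bff_j, \nonumber
\end{align}
up to the constant factor $1/\ln 2$. The coefficients are scalars: $\alpha_{kj}=\big(\frac{1}{S_k}-\frac{1-\delta_{kj}}{I_k}\big)\hat{\bh}_k^{\sf H}\bff_j$ and $\beta_k=\frac{1}{S_k}-\frac{1}{I_k}$. The first sum lies in $\mathrm{span}(\hat{\bH})$. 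The second sum equals $\sum_k{\boldsymbol{\Phi}}_k(\beta_k\bff_j)$, which lies in the column space of ${\boldsymbol{\Phi}}=[{\boldsymbol{\Phi}}_1,\dots,{\boldsymbol{\Phi}}_K]$. Hence $\bff_j=\bG\bv_j$, with $\bv_j$ stacking $\alpha_{\cdot j}/\mu$ and the $N$-blocks $\beta_k\bff_j/\mu$. Note that this representation uses $\bff_j$ itself in the coefficients. That is acceptable, because the claim concerns only range membership and the augmented matrix $\bG$ is allowed to be tall-wide.

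The main obstacle is showing that $\mu>0$ at any nontrivial stationary point, since dividing by $\mu$ is essential. Suppose $\mu=0$. Then summing $\bff_j^{\sf H}\,\partial(\cdot)/\partial\bff_j^{*}$ over $j$ should give zero. By the Euler identity for the quadratic forms, this sum equals $\sum_k\big[(S_k-\frac{\sigma^2}{P})/S_k-(I_k-\frac{\sigma^2}{P})/I_k\big]=\sum_k\frac{\sigma^2}{P}\big(\frac{1}{I_k}-\frac{1}{S_k}\big)$. Since $S_k\ge I_k$, this is nonnegative, and it is strictly positive unless every $\hat{\bh}_k^{\sf H}\bff_k=0$, i.e., unless the point is trivial with zero sum SE. This contradiction yields $\mu>0$, and complementary slackness then also gives $\mathrm{tr}(\bF\bF^{\sf H})=1$. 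We also need to confirm that constraint qualification holds; it does, because the constraint gradient $2\bF$ is nonzero on the boundary.

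Finally, we will address the word "unique". The stacked vector $\bv_k$ is the one produced by the KKT expression, i.e., it is canonically defined by the stationarity equation given $\mu$. It need not be the only representation, since $\bG$ generally has a nontrivial null space. We will state this explicitly, mirroring the usage in Proposition~\ref{proposition1}. We will also note that $\mathrm{span}(\bG)$ can be replaced by $\mathrm{span}(\hat{\bH})+\sum_k\mathrm{range}({\boldsymbol{\Phi}}_k)$. This is what motivates the low-rank approximation of the ${\boldsymbol{\Phi}}_k$ used later.
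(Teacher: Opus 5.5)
Your proposal is correct and follows essentially the same route as the paper: you write the KKT stationarity condition, show the multiplier is strictly positive by contradiction by contracting with $\bff_j^{\sf H}$ and summing (your Euler-identity computation is exactly the paper's reduction to $\sum_k S_k^{-1}=\sum_k I_k^{-1}$), and then divide by the multiplier to read off membership in the column space of $\bG=[\hat{\bH},\boldsymbol{\Phi}]$. Your explicit sign argument via $S_k\ge I_k$, your constraint-qualification check, and your caveat that $\bv_k$ is canonical but not unique (because $\bG$ has a nontrivial null space) are small but genuine improvements on the paper's presentation, which only asserts the ``if and only if'' step and claims uniqueness.
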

\begin{proof} 
    See Appendix~\ref{app:impCSIT}.
\end{proof}

\begin{remark}
    \label{rm2}
    \normalfont In Proposition~\ref{proposition2}, we note that the channel error covariance matrices significantly expand the dimension of the search space of the optimization problem in massive MIMO systems with imperfect CSIT.
    This expansion makes finding nontrivial stationary points considerably more challenging than under the perfect CSIT assumption of Proposition~\ref{proposition1}.
\end{remark}
As explained in Remark~\ref{rm2}, the imperfect CSIT assumption complicates identifying a nontrivial stationary point because channel estimation errors expand the search space.
To maintain computational tractability, we constrain the beamforming design to be restricted to the subspace spanned by the dominant eigenvectors of each user's error covariance matrix.
Accordingly, the dimensionality of this subspace is determined by the rank of $\boldsymbol{\Phi}_{k}$, leading to a low-rank approximation of the error covariance matrix for user $k$:
\begin{align}
    \label{eq:power_iteration_imperfect}
    {\boldsymbol{\Phi}}_{k} \approx \sum_{i=1}^{r_{k}} \lambda_{i,k} {{\boldsymbol{\psi}}}_{i,k}{{\boldsymbol{\psi}}}_{i,k}^{\sf H},
\end{align}
where $r_{k}$ is a positive integer value, which satisfies $0 \! < \! r_{k} \! \leq \! \mathrm{rank}(\boldsymbol{\Phi}_{k})$, and $\lambda_{i,k}$ and ${\boldsymbol{\psi}}_{i,k}$ are eigenvalues and eigenvectors of ${\boldsymbol{\Phi}}_{k}$, respectively.
For analytical simplicity, we assume $\lambda_{1,k}\geq...\geq \lambda_{r_{k},k}$ and $r = r_{k}$, i.e., all $r_{k}$ values are the same.

From \eqref{eq:power_iteration_imperfect}, the precoding vector for the $k$th user is defined as
\begin{align}
\label{eq:reformulated_opt_precoding_vector_imperfect}
    \bff_{k} = \hat{\bG} {\bv}_{k},
\end{align}
where $\hat{\bG} \! = \! [\hat{\bH}, {\boldsymbol{\Psi}}] \! \in \! \bbC^{N \times (K + rK)}$ is a combined matrix of the estimated channel $\hat{\bH}$ and the aggregated matrix of dominant eigenvectors ${\boldsymbol{\Psi}} \! = \! [{{\boldsymbol{\Psi}}}_{1}, ...,{{\boldsymbol{\Psi}}}_{K}] \! \in \! \bbC^{N \times rK}$.
Here, each block ${{\boldsymbol{\Psi}}}_{k} = [{{\boldsymbol{\psi}}}_{1,k}, ...,{{\boldsymbol{\psi}}}_{r,k}]$ collects the $r$ dominant eigenvectors of $\boldsymbol{\Phi}_{k}$. 
Thus, we define the precoder as a function of the estimated channel and approximated channel error covariance matrices:
\begin{align}
    \label{eq:opt_precoder_imp}
    \bF = \hat{\bG} {\bV},
\end{align}
where ${\bV} \! = \! [{\bv}_{1}, ..., {\bv}_{K}] \! \in \! \bbC^{(K+rK) \times K}$ is a unique weight matrix under the approximation in \eqref{eq:power_iteration_imperfect}.
As in \eqref{eq:normalization}, applying the normalization ${\bF}/{\|\bF\|_{\sf F}} \! = \! {\hat{\bG} {\bV}}/{\sqrt{\mathrm{tr} (\hat{\bG} {\bV} {\bV}^{\sf H} \hat{\bG}^{\sf H} )}}$, we reformulate the lower bound on the SE for the $k$th user in \eqref{eq:rate_lb} as
\begin{align}
    \label{eq:re_rate_imperfect}
    &\bar{R}^{\sf lbRef}_{k}({\bV}) = \log_{2} \left(1 + \frac{|\hat{\bh}^{\sf H}_{k} \hat{\bG} {\bv}_{k}|^{2}}{\sum_{i=1, i\ne k}^{K} |\hat{\bh}^{\sf H}_{k} \hat{\bG} {\bv}_{i}|^{2} + {\phi}_{k}^{\sf lb} + {z}^{\sf lb}} \right),
\end{align}
where ${\phi}_{k}^{\sf lb} \! = \! \sum_{i=1}^{K} {\bv}^{\sf H}_{i} \hat{\bG}^{\sf H} \boldsymbol{\Phi}_{k} \hat{\bG} {\bv}_{i}$ and ${z}^{\sf lb} \! = \! \frac{\sigma^{2}}{P} \mathrm{tr} (\hat{\bG} {\bV} {\bV}^{\sf H} \hat{\bG}^{\sf H} )$.

Accordingly, the optimization problem \eqref{avg_problem} is reformulated as
\begin{subequations}
    \label{re_problem2}
    \begin{align}
        \underset{{\bV}}{\text{maximize}}&\;\; \sum_{k=1}^{K} \bar{R}^{\sf lbRef}_{k}({\bV}) \label{eq:re_problem2}\\
        \text{subject to}&\;\; \mathrm{tr} \left(\hat{\bG} {\bV} {\bV}^{\sf H} \hat{\bG}^{\sf H} \right) = 1.
        \label{eq:re_constraint2}
    \end{align}
\end{subequations}
The approximation in \eqref{eq:reformulated_opt_precoding_vector_imperfect} is inherently suboptimal relative to the comprehensive framework of Proposition~\ref{proposition2}, since it does not fully span the space defined by all channel error covariance matrices.
Nevertheless, by constraining the optimization to the subspace formed by the $r$ dominant eigenvectors of each $\boldsymbol{\Phi}_{k}$, the dimensionality of the problem becomes independent of the number of BS antennas $N$.
This approach facilitates a substantial reduction in computational complexity in exchange for a marginal degradation in SE performance.

\subsection{Reformulation and Algorithm}
Similar to the perfect CSIT case, we apply the GPIP approach.
To this end, we transform the reformulated problem in \eqref{re_problem2} into the GPIP-friendly form.
Let $\bar{\bN}_{k}$ be
\begin{align}
    \label{eq:Mk_imperfect}
    \bar{\bN}_{k} = \hat{\bG}^{\sf H} \hat{\bh}_{k} \hat{\bh}_{k}^{\sf H} \hat{\bG} + \hat{\bG}^{\sf H} \boldsymbol{\Phi}_{k} \hat{\bG} \in \bbC^{(K+rK) \times (K+rK)}.
\end{align}
As in \eqref{eq:re_normalization}, by stacking ${\bv}_{k}$ as $\bar{\bv} \! = \! [{\bv}_{1}^{\sf T}, ..., {\bv}_{K}^{\sf T}]^{\sf T} \! \in \! \bbC^{(K^{2} + rK^{2})}$, we rewrite the trace term in $z^{\sf lb}$ as
\begin{align}
    \label{eq:re_normalization_imp}
    \mathrm{tr} \left(\hat{\bG} {\bV} {\bV}^{\sf H} \hat{\bG}^{\sf H} \right) = \bar{\bv}^{\sf H} \left(\bI_{K} \otimes \hat{\bG}^{\sf H} \hat{\bG} \right) \bar{\bv}.
\end{align}
Using \eqref{eq:Mk_imperfect} and \eqref{eq:re_normalization_imp}, we represent the lower bound SE expression $\bar{R}^{\sf lbRef}_{k}({\bV})$ as the Rayleigh quotient form:
\begin{align}
    \label{eq:rayleigh_form_imperfect}
    \bar{R}^{\sf lbRef}_{k}(\bar{\bv}) = \mathrm{log}_{2} \left(\frac{\bar{\bv}^{\sf H} \bar{\bA}_{k} \bar{\bv}}{\bar{\bv}^{\sf H} \bar{\bB}_{k} \bar{\bv}} \right),
\end{align}
where $\bar{\bA}_{k}$ and $\bar{\bB}_{k}$ are obtained as
\begin{align}
    &\bar{\bA}_{k} \! = \bI_{K} \otimes \bar{\bN}_{k} + \frac{\sigma^{2}}{P} \left(\bI_{K} \otimes \hat{\bG}^{\sf H} \hat{\bG} \right) \! \in \! \bbC^{(K^{2}+rK^{2}) \times (K^{2}+rK^{2})},
    \label{Ck_imperfect}
    \\
    &\bar{\bB}_{k} \! = \bar{\bA}_{k} - \be_{k}\be_{k}^{\sf H} \otimes \hat{\bG}^{\sf H} \hat{\bh}_{k} \hat{\bh}_{k}^{\sf H} \hat{\bG} \in \bbC^{(K^{2}+rK^{2}) \times (K^{2}+rK^{2})}.
    \label{Dk_imperfect}
\end{align}

From \eqref{eq:rayleigh_form_imperfect}, the optimization problem \eqref{avg_problem} is expressed as
\begin{align}
    \label{eq:re_problem_imperfect}
    \underset{\bar{\bv}}{\text{maximize}}&\;\; \sum_{k=1}^{K} \bar{R}^{\sf lbRef}_{k}(\bar{\bv}).
\end{align}
As in \eqref{eq:re_problem}, the objective function in \eqref{eq:re_problem_imperfect} is scale-invariant
with respect to $\bar{\bv}$ and the reformulated problem remains non-convex.
For \eqref{eq:re_problem_imperfect}, we also identify the stationary condition as in Lemma~\ref{Lemma2}.
\begin{lemma}
    \label{Lemma2}
    The first-order optimality condition of \eqref{eq:re_problem_imperfect} is represented in the generalized eigenvalue problem as
    \begin{align}
        \bar{\bB}_{\sf KKT}^{-1}(\bar{\bv})\bar{\bA}_{\sf KKT}(\bar{\bv}) = \lambda(\bar{\bv})\bar{\bv},
        \label{eq:NEPv_imp}
    \end{align}
    where
    \begin{align}
        \bar{\bA}_{\sf KKT}(\bar{\bv}) &= \sum_{k=1}^{K}\left(\frac{\bar{\bA}_{k}}{\bar{\bv}^{\sf H} \bar{\bA}_{k} \bar{\bv}} \right) \lambda_{\sf num}(\bar{\bv}),
        \label{eq:A_KKT_imp}
        \\
        \bar{\bB}_{\sf KKT}(\bar{\bv}) &= \sum_{k=1}^{K}\left(\frac{\bar{\bB}_{k}}{\bar{\bv}^{\sf H} \bar{\bB}_{k} \bar{\bv}} \right) \lambda_{\sf den}(\bar{\bv}),
        \label{eq:B_KKT_imp}
        \\
        \lambda(\bar{\bv})  &= \prod_{k=1}^{K} \left(\frac{\bar{\bv}^{\sf H} \bar{\bA}_{k} \bar{\bv}}{\bar{\bv}^{\sf H} \bar{\bB}_{k} \bar{\bv}} \right).
        \label{eq:eig_val_imp}
    \end{align}
    The numerator $\lambda_{\sf num}(\bar{\bv})$ and denominator $\lambda_{\sf den}(\bar{\bv})$ of the eigenvalue are any functions that satisfy $\lambda(\bar{\bv})=\lambda_{\sf num}(\bar{\bv})/\lambda_{\sf den}(\bar{\bv})$.
    \begin{proof}
        The proof follows the same as in Appendix~\ref{app:Lemma1}.
    \end{proof}
\end{lemma}

\begin{algorithm}[!t]
\caption{Scalable GPIP with Error Covariance under Imperfect CSIT}\label{alg:two}
\bf{initialize}: $\bar{\bv}^{(0)}\gets$ \rm{RZF}\\
\rm{Set the iteration count} $t=1.$\\
\While{$\|\bar{\bv}^{(t)} - \bar{\bv}^{(t-1)} \| > \varepsilon_{\bar{\bv}}$ \rm{and} $t \leq t_{\sf max}$}{
\rm{Build matrix} $\bar{\bA}_{\sf KKT}(\bar{\bv}^{(t-1)})$ according to \eqref{eq:A_KKT_imp}.\\
\rm{Build matrix} $\bar{\bB}_{\sf KKT} (\bar{\bv}^{(t-1)})$ according to \eqref{eq:B_KKT_imp}.\\
\rm{Compute} $\bar{\bv}^{(t)}$ = $\bar{\bB}_{\sf KKT}^{-1} (\bar{\bv}^{(t-1)}) \bar{\bA}_{\sf KKT} (\bar{\bv}^{(t-1)}) \bar{\bv}^{(t-1)}.$\\
Normalize $\bar{\bv}^{(t)} ={\bar{\bv}^{(t)}}/{\|\bar{\bv}^{(t)}\|}.$\\
$t \gets t+1.$\\
}
$\bar{\bv}^{\star} \gets \bar{\bv}^{(t)}.$\\
\rm{Build the unique weight matrix} ${\bV}^{\star} \gets \bar{\bv}^{\star}$.\\
\rm{Compute} $\bF = \hat{\bG} {\bV}^{\star}$.\\
\rm{Normalize} $\bF^{\star} = \bF/\|\bF\|_{\sf{F}}$.\\
\Return $\bF^{\star}$
\end{algorithm}

The proposed S-GPIP with error covariance (S-GPIP (cov)) method is summarized in Algorithm~\ref{alg:two}.
Since the optimization structure is isomorphic to that of Algorithm~\ref{alg:one}, we omit the repetitive iterative details.
Algorithm~\ref{alg:two} follows the same procedure but operates on the stacked weight vector $\bar{\bv}^{(t)}$, updating the matrices $\bar{\bA}_{\sf KKT} (\bar{\bv}^{(t-1)})$ and $\bar{\bB}_{\sf KKT} (\bar{\bv}^{(t-1)})$ according to \eqref{eq:A_KKT_imp} and \eqref{eq:B_KKT_imp}.
Upon convergence, the final precoder $\bF^{\star}$ is reconstructed from the optimized weight vector $\bar{\bv}^{\star}$ and normalized.

\subsection{Complexity Reduction}
\label{subsec:Complexity_Analysis2}

For the imperfect CSIT case, the computational complexity of Algorithm~\ref{alg:two} is also primarily determined by the matrix inversion $\bar{\bB}_{\sf KKT}^{-1} ( \bar{\bv}^{(t-1)} )$.
Our method utilizes a $(K^2+rK^2) \times (K^2+rK^2)$ block-diagonal matrix $\bar{\bB}_{\sf KKT}(\bar{\bv})$.
The inversion complexity per-iteration is thus $\CMcal{O}(K\cdot (K+rK)^{3}) = \CMcal{O}(\bar{r}^{3}K^4)$, where $\bar{r}=r+1$.
The total complexity, including $T$ iterations and the $\CMcal{O}(\bar{r}^2NK^2)$ complexity for pre-processing (e.g., finding $r$ eigenvectors for each user and computing $\hat{\bG}^{\sf H}\hat{\bG}$), is $\CMcal{O}(T\bar{r}^3 K^4 +  \bar{r}^2NK^2)$.
Consequently, by appropriately selecting $r \ll N$, the proposed S-GPIP (cov) algorithm achieves both high SE performance and low computational burden in massive MIMO systems under imperfect CSIT.

Similar to the perfect CSIT case, we can further reduce the inversion complexity to $\CMcal{O}(\bar{r}^3 K^3)$ per iteration by using the Sherman--Morrison formula-based computation. 
This optimization reduces the total computational complexity of the error covariance-based S-GPIP, S-GPIP (cov), from $\CMcal{O}(T\bar{r}^3 K^4 + \bar{r}^2 NK^2)$ to $\CMcal{O}(T\bar{r}^3 K^3 + \bar{r}^2 NK^2)$.

Table~\ref{Table1} summarizes the complexity order of GPIP and proposed S-GPIP. 
The proposed S-GPIP algorithms has a linear complexity with respect to the number of antennas $N$, enabling scalability to large antenna arrays. 
Therefore, the proposed methods is well suited for implementation in massive MIMO systems where $N\gg K$.

\begin{table}[t]
    \caption{Algorithm Complexity ($\bar{r}=r+1$)}
    \label{Table1}
    \centering
    {
    \scriptsize
        \begin{tabular}{c|c|c}
        \hline
        \hline
         {Algorithms} & {Naive Complexity Order} & {Reduced Complexity Order} \\ \cline{1-3} 
         {GPIP} & {$\CMcal{O}(TN^3 K)$} & {$\CMcal{O}(TN^2 K)$} \\
         {GPIP (cov)} & {$\CMcal{O}(T N^3 K)$} & {$\CMcal{O}(T \bar{r} N^2 K)$} \\
         {S-GPIP} & {$\CMcal{O}(TK^4 + NK^2)$} & {$\CMcal{O}(TK^3 + NK^2)$} \\
         {S-GPIP (cov)} & {$\CMcal{O}(T\bar{r}^3K^4 + \bar{r}^2NK^2)$} & {$\CMcal{O}(T\bar{r}^3K^3 + \bar{r}^2NK^2)$} \\
         \hline
         \hline
        \end{tabular}
    }
\end{table}






\section{Convergence}
\label{sec:conv_pg}
In this section, we interpret the GPIP update rule as a class of PPGA methods and establish its convergence properties under perfect CSIT. The analysis naturally extends to the imperfect CSIT case.

\subsection{PPGA Interpretation of the GPIP Update Rule}
We consider the sum-log objective in \eqref{eq:re_problem} as
\begin{align}
\label{eq:obj_L}
    L(\bar{\bw}) = \sum_{k=1}^{K}\log_{2} \left( \frac{\bar{\bw}^{\sf H}{\bA}_k\bar{\bw}}{\bar{\bw}^{\sf H}{\bB}_k\bar{\bw}} \right),
\end{align}
which is scale-invariant, i.e., $L(\alpha \bar{\bf w})=L(\bar{\bf w})$ for any $\alpha\neq 0$. Hence, we impose the unit-sphere constraint as
\begin{align}
\label{eq:sphere}
    \mathcal{S} = \{\bar{\bf w}\in\mathbb C^{K^2}:\ \|\bar{\bf w}\|_2=1\}.
\end{align}
Now, let us define the matrices
\begin{align}
\label{eq:tildeAtildeB}
    \widetilde{{\bA}}(\bar{\bw}) = \sum_{k=1}^{K} \left( \frac{{\bA}_k}{\bar{\bw}^{\mathsf H}{\bA}_k\bar{\bw}} \right),
    \qquad
    \widetilde{{\bB}}(\bar{\bw}) = \sum_{k=1}^{K} \left( \frac{{\bB}_k}{\bar{\bw}^{\mathsf H}{\bB}_k\bar{\bw}} \right).
\end{align}
A standard complex matrix derivative yields
\begin{align}
\label{eq:grad_L}
    \nabla L(\bar{\bw}) = 2\left( \widetilde{{\bA}}(\bar{\bw})-\widetilde{{\bB}}(\bar{\bw}) \right) \bar{\bw}.
\end{align}
We define the GPIP mapping before the normalization (line 6 in Algorithm~\ref{alg:one}) as
\begin{align}
\label{eq:g_mapping}
    {\bg}(\bar{\bw}) = \widetilde{{\bB}}^{-1}(\bar{\bw})\widetilde{{\bA}}(\bar{\bf w})\bar{\bw}.
\end{align}
We remark that indeed, $\lambda_{\sf num}(\bar{\bw})$ and $\lambda_{\sf den}(\bar{\bw})$ can be omitted when implementing the GPIP algorithm since they contribute only to the scale of the updated vector, which does not affect the update due to the normalization step in the algorithm.
Combining \eqref{eq:grad_L} and \eqref{eq:g_mapping}, we obtain the exact identity
\begin{align}
    \label{eq:g_equals_grad}
    {\bg}(\bar{\bw}) = \bar{\bw} + \tfrac{1}{2} \, \widetilde{{\bB}}^{-1}(\bar{\bw})\nabla L(\bar{\bw}).
\end{align}
Therefore, ${\bg}(\bar{\bw}) - \bar{\bw}$ is proportional to a preconditioned gradient direction.

The Euclidean projection onto $\mathcal S$ is
\begin{align}
    \label{eq:proj_sphere}
    \mathrm{Proj}_{\mathcal S}({\bz})=\frac{{\bz}}{\|{\bz}\|_2}.
\end{align}
Motivated by \eqref{eq:g_equals_grad}, we consider the following  PPGA update:
\begin{align}
\label{eq:ppg_update}
    \bar{\bw}^{t+1}
    =\mathrm{Proj}_{\mathcal S}\! \left( \bar{\bw}^t+\eta\,{\bf P}(\bar{\bw}^t) \nabla L(\bar{\bw}^t) \right),
\end{align}
where $\eta>0$ is a step size and 
\begin{align}
    {\bP}(\bar{\bw}) = \widetilde{{\bB}}^{-1}(\bar{\bw}).
\end{align}
When $\eta=\tfrac{1}{2}$, we have
\begin{align}
    \bar{\bw}^{t+1} = \mathrm{Proj}_{\mathcal S} \left( \bg (\bar{\bw}^t) \right),
\end{align}
since \eqref{eq:g_equals_grad} implies
$\bg(\bar{\bw}^t) = \bar{\bw}^{t} + \frac{1}{2} \bP(\bar{\bw}^t) \nabla{L}(\bar{\bw}^t)$.
Consequently, the proposed damped update coincides with a GPIP step.

\subsection{Convergence Analysis and Convergent S-GPIP}
To prove the convergence of the GPIP through the interpretation as PPGA, we introduce the following assumptions.
\begin{assumption}[Euclidean smoothness near $\mathcal S$]
    \label{ass:smooth}
    The objective $L(\bar{\bw})$ admits a Lipschitz continuous Euclidean gradient on an open neighborhood $\mathcal U$ containing the sphere 
    $\mathcal S = \{\bar{\bw}:\|\bar{\bw}\|_2 = 1\}$, i.e.,
    \begin{align}
        \|\nabla L(\bx)-\nabla L(\by)\|_2
        \le
        L_g \|\bx-\by\|_2,
        \qquad
        \forall \bx,\by \in \mathcal U.
    \end{align}
\end{assumption}

\begin{assumption}[Uniformly bounded preconditioner]
    \label{ass:bounded_precond}
    There exist constants $0 < m \le M < \infty$ such that for all $\bar{\bw}\in\mathcal S$,
    \begin{align}
        \label{eq:bounded_precond}
        m{\bI} \preceq {\bP}(\bar{\bw})\preceq M{\bI}.
    \end{align}
    Equivalently, $\widetilde{{\bB}}(\bar{\bw})$ is uniformly positive definite on $\mathcal S$.
\end{assumption}

In the following lemma, we show that Assumption~\ref{ass:bounded_precond} is satisfied with specific $m$ and $M$.
To this end, we recall that
\begin{align}
    \label{eq:AkBk_structure}
    {\bA}_k &= {\bI}_K\otimes {\bN}_k + {\bC},
    \\
    {\bB}_k &= {\bA}_k - {\be}_k{\be}_k^{\sf H}\otimes {\bN}_k,
    \\
    {\bC} &= \frac{\sigma^2}{P} \left( {\bI}_K \otimes {\bH}^{\sf H}{\bH} \right),
\end{align}
and define ${\bC}_0 = \frac{\sigma^2}{P}{\bf H}^{\sf H}{\bf H}$ so that ${\bC}={\bI}_K \otimes {\bC}_0$.

\begin{lemma}[Uniform bounds for $\widetilde{{\bB}}^{-1}(\bar{\bw})$]
\label{lem:bounded_precond_block}
Assume $\sigma^2>0$ and ${\bH}^{\sf H}{\bH} \succ {\bf 0}$.
Let
\begin{align}
    \label{eq:bmin_bmax_def}
    b_{\min} = \lambda_{\min}({\bC}_0),
    \qquad
    b_{\max} = \max_{k}\lambda_{\max}({\bN}_k+{\bC}_0),
\end{align}
and define block matrices
\begin{align}
    \label{eq:S_i_def}
    {\bS}^{(i)} = \left( \sum_{k=1, k \neq i}^{K} {\bN}_k \right) + K{\bC}_0,\qquad i=1,\dots,K.
\end{align}
Then, for all $\bar{\bf w}\in\mathcal S$,
\begin{align}
    \label{eq:tildeB_bounds}
    \frac{1}{b_{\max}} \left( \sum_{k=1}^{K}{\bB}_k \right)\ \preceq\ \widetilde{{\bB}}(\bar{\bw})\ \preceq\ \frac{1}{b_{\min}}\left( \sum_{k=1}^{K}{\bB}_k \right),
\end{align}
and consequently
\begin{align}
    \label{eq:P_bounds_tight}
    m{\bI} \preceq \widetilde{{\bB}}^{-1}(\bar{\bw})\preceq M{\bI},\ \forall \bar{\bw}\in\mathcal S,
\end{align}
where
\begin{align}
    m = \frac{b_{\min}}{\max_i\lambda_{\max}({\bS}^{(i)})},
    \qquad
    M = \frac{b_{\max}}{\min_i\lambda_{\min}({\bS}^{(i)})}.
\end{align}
\end{lemma}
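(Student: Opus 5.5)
The plan is to bound the scalar weights $(\bar{\bw}^{\sf H}\bB_k\bar{\bw})^{-1}$ uniformly on $\mathcal S$, then to exploit the block-diagonal structure of $\sum_k \bB_k$, and finally to invert the resulting Loewner bounds.

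\emph{Step 1 (block structure of $\bB_k$).} From \eqref{eq:AkBk_structure}, $\bB_k = \bI_K\otimes\bN_k + \bI_K\otimes\bC_0 - \be_k\be_k^{\sf H}\otimes\bN_k$. This is block diagonal with $K\times K$ diagonal blocks equal to $\bN_k+\bC_0$ for block index $i\neq k$, and equal to $\bC_0$ for $i=k$. Writing $\bar{\bw}=[\bw_1^{\sf T},\dots,\bw_K^{\sf T}]^{\sf T}$, we get
\begin{align*}
\bar{\bw}^{\sf H}\bB_k\bar{\bw}=\sum_{i\neq k}\bw_i^{\sf H}(\bN_k+\bC_0)\bw_i+\bw_k^{\sf H}\bC_0\bw_k .
\end{align*}
Since $\bN_k\succeq\mathbf 0$, every block has its spectrum in $[\lambda_{\min}(\bC_0),\lambda_{\max}(\bN_k+\bC_0)]\subseteq[b_{\min},b_{\max}]$. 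Together with $\sum_i\|\bw_i\|^2=1$ on $\mathcal S$, this yields $b_{\min}\le\bar{\bw}^{\sf H}\bB_k\bar{\bw}\le b_{\max}$ for every $k$. The hypotheses $\sigma^2>0$ and $\bH^{\sf H}\bH\succ\mathbf 0$ give $b_{\min}>0$, so all the weights are well defined.

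\emph{Step 2 (sandwich \eqref{eq:tildeB_bounds}).} Each $\bB_k\succeq\mathbf 0$, and its scalar coefficient $1/(\bar{\bw}^{\sf H}\bB_k\bar{\bw})$ lies in $[1/b_{\max},1/b_{\min}]$. Scaling a PSD matrix by a scalar in this interval preserves the Loewner order. Summing over $k$ in \eqref{eq:tildeAtildeB} therefore gives $\frac{1}{b_{\max}}\sum_k\bB_k\preceq\widetilde{\bB}(\bar{\bw})\preceq\frac{1}{b_{\min}}\sum_k\bB_k$.

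\emph{Step 3 (spectrum of $\sum_k\bB_k$).} By Step 1, $\sum_k\bB_k$ is block diagonal. Its $i$th block is
\begin{align*}
\sum_{k\neq i}(\bN_k+\bC_0)+\bC_0=\sum_{k\neq i}\bN_k+K\bC_0=\bS^{(i)},
\end{align*}
so its eigenvalues lie in $[\min_i\lambda_{\min}(\bS^{(i)}),\max_i\lambda_{\max}(\bS^{(i)})]$. Here $\min_i\lambda_{\min}(\bS^{(i)})\ge K b_{\min}>0$. Combining with Step 2 gives
\begin{align*}
\frac{\min_i\lambda_{\min}(\bS^{(i)})}{b_{\max}}\bI\preceq\widetilde{\bB}(\bar{\bw})\preceq\frac{\max_i\lambda_{\max}(\bS^{(i)})}{b_{\min}}\bI .
\end{align*}

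\emph{Step 4 (inversion).} Matrix inversion is order-reversing on positive definite matrices; equivalently, the eigenvalues of $\widetilde{\bB}^{-1}$ are the reciprocals of those of $\widetilde{\bB}$. Applying this to Step 3 yields \eqref{eq:P_bounds_tight} with the stated $m$ and $M$.

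There is no real obstacle here. The only point needing care is Step 1: one must correctly identify which diagonal block loses $\bN_k$, because this is what makes the lower bound $b_{\min}=\lambda_{\min}(\bC_0)$ rather than something larger. The other point is to check that positivity (and hence invertibility) follows from $\bC_0\succ\mathbf 0$.
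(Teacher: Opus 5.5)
Your proposal is correct and follows the paper's proof essentially step for step. You bound $\bar{\bw}^{\sf H}\bB_k\bar{\bw}$ in $[b_{\min},b_{\max}]$ using the block-diagonal structure of $\bB_k$, sandwich $\widetilde{\bB}(\bar{\bw})$ between scaled copies of $\sum_k\bB_k$, read off the spectrum of $\sum_k\bB_k$ from its diagonal blocks $\bS^{(i)}$, and invert—the same chain of steps as the paper, written slightly more explicitly (e.g., the blockwise quadratic form and the check $\min_i\lambda_{\min}(\bS^{(i)})\ge K b_{\min}>0$).
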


\begin{proof}
From \eqref{eq:AkBk_structure}, each ${\bf B}_k$ is block-diagonal with the $i$th block equal to ${\bf C}_0$ if $i=k$, and ${\bf N}_k+{\bf C}_0$ if $i\neq k$.
Thus, $\lambda_{\min}({\bB}_k) \ge \lambda_{\min}({\bC}_0) = b_{\min} > 0$ and
$\lambda_{\max}({\bB}_{k}) = \lambda_{\max}({\bN}_{k} + {\bC}_0)\le b_{\max}$.
For $\|\bar{\bw}\|_2 = 1$, this implies $b_{\min} \le \bar{\bw}^{\sf H}{\bB}_{k} \bar{\bw} \le b_{\max}$, so that
\begin{align}
    \frac{1}{b_{\max}}{\bf B}_{k} \preceq \frac{{\bf B}_{k}}{\bar{\bf w}^{\mathsf H} {\bf B}_{k} \bar{\bf w}} \preceq \frac{1}{b_{\min}}{\bf B}_{k}.
\end{align}
Summing over $k$ yields \eqref{eq:tildeB_bounds}.
Moreover, $\sum_{k=1}^{K} {\bB}_{k}$ is block-diagonal with the $i$th block given by ${\bS}^{(i)}$ in \eqref{eq:S_i_def}, therefore $\lambda_{\max}(\sum_k {\bB}_k) = \max_{i} \lambda_{\max}({\bS}^{(i)})$ and
$\lambda_{\min}(\sum_{k=1}^{K} {\bB}_{k})=\min_{i} \lambda_{\min}({\bS}^{(i)})$.
Applying eigenvalue inequalities and inversion completes \eqref{eq:P_bounds_tight}.
\end{proof}

\begin{theorem}
\label{thm:ppg_convergence}
Suppose Assumptions~\ref{ass:smooth}--\ref{ass:bounded_precond} hold.
Consider the iteration \eqref{eq:ppg_update} with a constant step size $\eta$ satisfying
\begin{align}
    \label{eq:eta_condition}
    0 < \eta < \frac{2m}{L_{g} M^2}.
\end{align}
Then:
\begin{enumerate}
\item (\emph{Sufficient ascent}) The objective is nondecreasing along the iterates:
\begin{align}
    \label{eq:monotone}
    L(\bar{\bw}^{t+1}) \ge L(\bar{\bw}^{t}), \quad \forall t.
\end{align}
\item (\emph{Vanishing gradient}) We have $\|\nabla L(\bar{\bw}^{t})\|_2\to 0$ as $t \to \infty$.
In particular, every accumulation point $\bar{\bw}^\star$ of $\{\bar{\bw}^{t}\}$ is a stationary point of \eqref{eq:obj_L} on $\mathcal S$.
\end{enumerate}
\end{theorem}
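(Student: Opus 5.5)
Write $\bz^t=\bar{\bw}^t+\eta\bP(\bar{\bw}^t)\nabla L(\bar{\bw}^t)$, so that $\bar{\bw}^{t+1}=\bz^t/\|\bz^t\|_2$. The proof runs in three steps. First obtain a descent-lemma inequality for the unnormalized step $\bz^t$. Then use scale invariance of $L$ to transfer it to $\bar{\bw}^{t+1}$. Finally telescope.

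\textbf{Step 1: the unnormalized step.} Because $L$ is scale invariant, $\bar{\bw}^{\sf H}\nabla L(\bar{\bw})$ is purely imaginary; the Euler identity gives $\mathrm{Re}\{\bar{\bw}^{\sf H}\nabla L(\bar{\bw})\}=0$, and one can also check this directly from \eqref{eq:grad_L}. Hence
\begin{align}
\|\bz^t\|_2^2=1+\eta^2\|\bP\nabla L\|_2^2\ge 1 .
\end{align}
Along the whole segment from $\bar{\bw}^t$ to $\bz^t$ the norm is therefore at least one.

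Under Assumption~\ref{ass:smooth}, and with the real inner product $\langle\bx,\by\rangle=\mathrm{Re}\{\bx^{\sf H}\by\}$ matching the gradient convention in \eqref{eq:grad_L}, the ascent lemma gives
\begin{align}
L(\bz^t)\ge L(\bar{\bw}^t)+\eta\langle\nabla L,\bP\nabla L\rangle-\tfrac{L_g\eta^2}{2}\|\bP\nabla L\|_2^2 .
\end{align}
Assumption~\ref{ass:bounded_precond}, with constants supplied by Lemma~\ref{lem:bounded_precond_block}, bounds the middle term below by $m\|\nabla L\|_2^2$ and bounds $\|\bP\nabla L\|_2^2\le M^2\|\nabla L\|_2^2$. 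Combining these,
\begin{align}
L(\bz^t)\ge L(\bar{\bw}^t)+\eta\Big(m-\tfrac{L_g\eta M^2}{2}\Big)\|\nabla L(\bar{\bw}^t)\|_2^2 ,
\end{align}
and the coefficient is strictly positive exactly when \eqref{eq:eta_condition} holds.

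\textbf{Step 2: the normalized step.} Scale invariance gives $L(\bar{\bw}^{t+1})=L(\bz^t)$, so the projection loses nothing. This yields the sufficient ascent \eqref{eq:monotone} with the explicit gain $c\|\nabla L(\bar{\bw}^t)\|_2^2$, where $c=\eta(m-L_g\eta M^2/2)>0$.

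\textbf{Step 3: vanishing gradient.} For $\bar{\bw}\in\mathcal S$ each ratio in $L$ is bounded: $\bar{\bw}^{\sf H}\bA_k\bar{\bw}\le\lambda_{\max}(\bA_k)$ and $\bar{\bw}^{\sf H}\bB_k\bar{\bw}\ge b_{\min}>0$. Thus $L$ is bounded above on $\mathcal S$. Telescoping the ascent inequality gives $c\sum_t\|\nabla L(\bar{\bw}^t)\|_2^2\le \sup_{\mathcal S}L-L(\bar{\bw}^0)<\infty$, so $\|\nabla L(\bar{\bw}^t)\|_2\to0$. Since $\mathcal S$ is compact, accumulation points exist. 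Continuity of $\nabla L$ on $\mathcal U$ then gives $\nabla L(\bar{\bw}^\star)=\mathbf 0$, and since $\bar{\bw}^\star\in\mathcal S$, the point is stationary for \eqref{eq:obj_L} on $\mathcal S$. Indeed its Riemannian gradient, the tangential projection of $\nabla L$, also vanishes.

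\textbf{Main obstacle.} The delicate point is applying the ascent lemma on a segment that leaves $\mathcal S$, since Assumption~\ref{ass:smooth} only holds on the neighborhood $\mathcal U$. I would handle this as follows. By Lemma~\ref{lem:bounded_precond_block}, $\|\eta\bP\nabla L\|_2\le\eta M G$, where $G=\sup_{\mathcal S}\|\nabla L\|$ is finite by compactness. If this bound is not already small enough, I would instead apply the lemma to the scaled point $\bz^t/\|\bz^t\|_2$ via a one-dimensional argument along the segment. Alternatively, I would note that $L$ is smooth on all of $\mathbb C^{K^2}\setminus\{\mathbf 0\}$ and homogeneous of degree zero, so $\mathcal U$ can be taken as the closed annulus $\{1\le\|\bx\|\le R\}$ with $R=\sqrt{1+\eta^2M^2G^2}$. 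This annulus contains every segment by Step 1, and $L_g$ is then read as the Lipschitz constant on it.

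A secondary check is the real-versus-complex gradient convention, including the factor of $2$ in \eqref{eq:grad_L}. This must be handled consistently so that the ascent-lemma constant matches the condition \eqref{eq:eta_condition}.
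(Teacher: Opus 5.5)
Your main line is the paper's own proof: the ascent lemma at the unnormalized point $\bz^t$, the bounds $\langle\nabla L,\bP\nabla L\rangle\ge m\|\nabla L\|_2^2$ and $\|\bP\nabla L\|_2^2\le M^2\|\nabla L\|_2^2$, scale invariance to get $L(\bar{\bw}^{t+1})=L(\bz^t)$, and telescoping against an upper bound for $L$ on the compact sphere. That part is sound.

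The step that is wrong is the identity $\|\bz^t\|_2^2=1+\eta^2\|\bP\nabla L\|_2^2$ in Step~1. Scale invariance does give $\bar{\bw}^{\sf H}\nabla L(\bar{\bw})=0$; from \eqref{eq:grad_L} it equals $2(K-K)$. But the cross term in $\|\bz^t\|_2^2$ is $2\eta\,\mathrm{Re}\{\bar{\bw}^{\sf H}\bP\nabla L\}$, and on $\mathcal S$ we have $\bar{\bw}^{\sf H}\bP\nabla L(\bar{\bw})=2\big(\bar{\bw}^{\sf H}\widetilde{\bB}^{-1}\widetilde{\bA}\bar{\bw}-1\big)$. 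Orthogonality to $\nabla L$ does not survive preconditioning unless, for instance, $\bar{\bw}$ is an eigenvector of $\widetilde{\bB}$. Concretely, take $K=2$, $\bH^{\sf H}\bH=\bI_2$, $\sigma^2/P=0.01$, and $\bar{\bw}$ with squared entry moduli $(0.45,0.05,0.05,0.45)$. Then $\widetilde{\bB}$ is diagonal with entries $(1/3,17,17,1/3)$, $\widetilde{\bA}=2\bI$, and $\bar{\bw}^{\sf H}\widetilde{\bB}^{-1}\widetilde{\bA}\bar{\bw}\approx 5.4$. Nothing in your argument controls the sign of this cross term. So the claim that the segment $[\bar{\bw}^t,\bz^t]$ stays in $\{\|\bx\|_2\ge 1\}$ is unsupported. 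Your annulus $\{1\le\|\bx\|_2\le R\}$, whose $R$ also came from the false identity, need not contain the segment.

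The repair reintroduces a smallness condition on $\eta$. By the triangle inequality the segment lies in $\{1-\eta MG\le\|\bx\|_2\le 1+\eta MG\}$, which avoids the origin only if $\eta MG<1$. Because $L$ is homogeneous of degree zero and smooth away from the origin, its Hessian on $\{\|\bx\|_2\ge r\}$ is bounded by $r^{-2}$ times its bound on $\mathcal S$. The ascent estimate therefore holds with $L_g$ taken as that constant at $r=1-\eta MG$, and the step condition is met for all sufficiently small $\eta$.

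This is exactly what the paper does: it assumes $\eta$ is small enough that $\bz^{t+1}\in\mathcal U$. Strictly, the whole segment must lie in $\mathcal U$, which your tube bound $\|\eta\bP\nabla L\|_2\le\eta MG$ makes explicit. With this fix your proof stands on the same footing as the paper's, but it does not give the smallness-free version your annulus device was aiming for. Your other fallback, applying the lemma to $\bz^t/\|\bz^t\|_2$ via a one-dimensional argument, is too vague to evaluate and should be dropped.
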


\begin{proof}
Let ${\bz}^{t+1} = \bar{\bw}^{t} + \eta {\bP}(\bar{\bw}^{t})\nabla L(\bar{\bw}^{t})$ and $\bar{\bw}^{t+1} = \mathrm{Proj}_{\mathcal S}({\bz}^{t+1})$.
We assume that the step size $\eta$ is sufficiently small so that $\bz^{t+1} \in \mathcal{U}$ is guaranteed.
By $L_g$-smoothness ascent lemma \cite{nesterov2018lectures} applied to $L(\bar{\bw})$,
\begin{align}
    \label{eq:ascent_lemma_step}
    L({\bz}^{t+1}) \ge& L(\bar{\bw}^t) + \eta \left\langle \nabla L(\bar{\bw}^{t}),\,{\bP}(\bar{\bw}^{t}) \nabla L(\bar{\bw}^t) \right\rangle \nonumber
    \\
    &- \frac{L_g}{2} \eta^2 \|{\bP}(\bar{\bw}^{t}) \nabla L(\bar{\bw}^{t})\|_2^2.
\end{align}
Using Assumption~\ref{ass:bounded_precond}, we have the following inequalities:
\begin{align}
    \label{eq:quad_bounds_1}
    \left\langle \nabla L(\bar{\bw}^{t}),\,{\bP (\bar{\bw}^{t})} \nabla L(\bar{\bw}^{t}) \right\rangle &\ge m \|\nabla L(\bar{\bw}^{t}) \|_2^2,
    \\
    \label{eq:quad_bounds_2}
    \|{\bP (\bar{\bw}^{t})}\nabla L(\bar{\bw}^{t}) \|_2^2 &\le M^2\|\nabla L(\bar{\bw}^{t})\|_2^2.
\end{align}
Substituting \eqref{eq:quad_bounds_1} and \eqref{eq:quad_bounds_2} into \eqref{eq:ascent_lemma_step} yields
\begin{align}
    \label{eq:increase_lower_bound}
    L({\bz}^{t+1})-L(\bar{\bw}^t) \ge \left(\eta m-\frac{L_g}{2} \eta^2 M^2 \right)\|\nabla L(\bar{\bw}^t)\|_2^2.
\end{align}
Under \eqref{eq:eta_condition}, the coefficient is nonnegative and strictly positive whenever $\nabla L(\bar{\bw}^t)\neq {\bf 0}$.
Since $L$ is scale-invariant, normalizing ${\bz}^{t+1}$ onto $\mathcal S$ does not change the objective value, i.e.,
\begin{align}
    \label{eq:Lw_Lz}
    L(\bar{\bw}^{t+1})=L({\bz}^{t+1}),
\end{align}
which implies \eqref{eq:monotone}.
Moreover, since the iterates $\{\bar{\bw}^t\}$ lie on the compact set 
$\mathcal S = \{\bar{\bw}:\|\bar{\bw}\|_2 = 1\}$ 
and $L$ is continuous on a neighborhood of $\mathcal S$, 
the sequence $\{L(\bar{\bw}^t)\}$ is bounded above and therefore convergent.
Summing \eqref{eq:increase_lower_bound} over $t$ then gives $\sum_t \|\nabla L(\bar{\bw}^t)\|_2^2 < \infty$, implying
$\|\nabla L(\bar{\bw}^t)\|_2 \to 0$ and that any accumulation point is stationary on $\mathcal S$.
\end{proof}

\begin{remark}[Convergent S-GPIP update]
\label{rem:eta_one}
\normalfont The exact S-GPIP step corresponds to a  move toward ${\bg}(\bar{\bw})$ with $\eta = 1/2$. 
Such a fixed $\eta$, however, may violate the convergence condition in
Theorem~\ref{thm:ppg_convergence}.
To guarantee the convergence, motivated by Theorem~\ref{thm:ppg_convergence}, we can use the damped update in \eqref{eq:ppg_update} with adapting $\eta$ by employing a backtracking line-search to ensure the convergence condition \eqref{eq:eta_condition}, resulting in a convergent S-GPIP algorithm.
\end{remark}
The convergent S-GPIP algorithm is described in Algorithm~\ref{alg:CS-GPIP}. 
The algorithm is applicable to both the perfect CSIT and imperfect CSIT S-GPIP cases by defining $L(\cdot)$, $\widetilde{\bA}(\cdot)$, and $\widetilde{\bB}(\cdot)$ according to each case. 
Since the complexity is also determined by the inversion of $\widetilde{\bB}(\cdot)$, the overall complexity remains the same.

\begin{algorithm}[!t]
\caption{Convergent Scalable GPIP}\label{alg:CS-GPIP}
\bf{initialize}: $\bar{\bu}^{(0)}\gets$ \rm{RZF}\\
\rm{Set the iteration count} $t=1.$\\
\While{$\|\bar{\bu}^{(t)} - \bar{\bu}^{(t-1)} \| > \varepsilon_{\bar{\bu}}$ \rm{and} $t \leq t_{\sf max}$}{
\rm{Build } $\widetilde{\bA}(\bar{\bu}^{(t-1)})$ according to \eqref{eq:A_KKT} or \eqref{eq:A_KKT_imp}.\\
\rm{Build } $\widetilde{\bB}(\bar{\bu}^{(t-1)})$ according to  \eqref{eq:B_KKT} or \eqref{eq:B_KKT_imp}.\\
\rm{Compute} $\eta$ using backtracking from \eqref{eq:ppg_update}\\
\rm{Compute} $  \bar{\bf u}^{(t)}
    =\bar{\bf u}^{(t-1)}+\eta\,\widetilde{\bB}^{-1}(\bar{\bu}^{(t-1)})\nabla L(\bar{\bf u}^{(t-1)}),$\\
Normalize $\bar{\bu}^{(t)} ={\bar{\bu}^{(t)}}/{\|\bar{\bu}^{(t)}\|}.$\\
$t \gets t+1.$\\
}
$\bar{\bu}^{\star} \gets \bar{\bu}^{(t)}.$\\
\rm{Build the unique weight matrix} ${\bU}^{\star} \gets \bar{\bu}^{\star}$.\\
\rm{Compute} $\bF = {\bG} {\bU}^{\star}$ or $\bF = \hat{\bG} {\bU}^{\star}$.\\
\rm{Normalize} $\bF^{\star} = \bF/\|\bF\|_{\sf{F}}$.\\
\Return $\bF^{\star}$
\end{algorithm}

\section{Numerical Results}
\label{sec:Numerical_results}
In this section, we evaluate the SE performance, computational complexity, and convergence of the proposed methods.
For simulations, we assume that the channel vector follows a complex Gaussian distribution $\bh_{k} \sim \cC\cN({\bf{0}}_{N \times 1},\bK_{\bh_{k}})$, where $\bK_{\bh_{k}}$ is modeled by a geometric one-ring scattering channel via the Karhunen-Loeve model \cite{adhikary2013TIT:one-ring, jiang2015TWC:one-ring}.
In frequency division duplex (FDD) systems, the BS typically acquires CSIT through limited feedback.
We model ${\hat{\bh}}_{k}$ as the MMSE estimate of $\bh_{k}$. 
According to the standard MMSE estimation theory for Gaussian random vectors, the estimation error vector ${\boldsymbol{\phi}}_{k}$ is a zero mean vector with covariance matrix ${\bf{\Phi}}_{k} \! = \! \bK_{\bh_{k}} \! - \bK_{\hat{\bh}_{k}}$, where $\bK_{\hat{\bh}_{k}}$ is the covariance of the estimate ${\hat{\bh}}_{k}$.
Using the Karhunen-Loeve model, the channel vector is decomposed as $\bh_{k} \! = \! \bU_{k} {\bf{\Lambda}}^{1/2}_{k} \bg_{k}$, where ${\bf{\Lambda}}_{k}$ is a diagonal matrix which contains the non-zero eigenvalues of $\bK_{\bh_{k}}$, $\bU_{k}$ is a matrix of corresponding eigenvectors, and {$\bg_{k} \! \sim \! \cC\cN({\bf{0}}_{N},\bI_{N})$} is a random vector with IID entries.
Accordingly, the estimated channel for user $k$ is described as
\begin{align}
    \label{eq:estimated_channel}
    \hat{\bh}_{k} = \bU_{k} {\bf{\Lambda}}^{1/2}_{k} \left( \sqrt{1-\kappa^{2}}\bg_{k} + \kappa \bq_{k} \right),
\end{align}
where $\kappa \in [0,1]$ indicates the quality of instantaneous CSIT and $\bq_{k}$ follows IID {$\cC\cN({\bf{0}}_{N},\bI_{N})$}.
Consequently, the channel estimation error covariance matrix is given by
\begin{align}
    \label{eq:error_covariance}
    {\bf{\Phi}}_{k} = \bbE \left[{\boldsymbol{\phi}}_{k} {\boldsymbol{\phi}}^{\sf H}_{k} \right] = \bU_{k} {\bf{\Lambda}}^{1/2}_{k} \left(2-2\sqrt{1-\kappa^{2}} \right) ({\bf{\Lambda}}^{1/2}_{k})^{\sf H} \bU^{\sf H}_{k}.
\end{align}
By varying $\kappa$, the channel estimation accuracy can be conveniently controlled within this model. Although the proposed S-GPIP algorithm accommodates arbitrary CSIT imperfections, we adopt this FDD-based channel estimation model for simplicity.

Unless specified otherwise, the common simulation parameters are as follows: all users are uniformly distributed at distances $d \in [20, 100]\mathrm{\,m}$ from the BS.
The convergence thresholds are set to $\epsilon_{\bar{\bw}} = \epsilon_{\bar{\bv}} = \epsilon_{\bar{\bu}} = 10^{-2}$, with a maximum of $t_{\sf max} = 100$ iterations.
The large-scale fading follows the 3GPP Urban Microcell (UMi) NLoS path loss model defined in \cite{tr385g:3GPP_Re18}: $PL(f_c,d) \! = \! 35.3 \log_{10}(d_{\sf 3D}) + 22.4 + 21.3 \log_{10}(f_c) - 0.3(h_{\sf UT} - 1.5) \, \mathrm{dB}$, where $d_{\sf 3D} = \sqrt{h_{\sf BS}^2 + d^2}$ is the three-dimensional distance between the BS and user.
We assume that the BS and user heights are fixed at $h_{\sf BS} = 10\,\mathrm{m}$ and $h_{\sf UT} = 1.5\,\mathrm{m}$, respectively.
The system parameters include a carrier frequency $f_c \! = \! 10.5 \, \mathrm{GHz}$, a bandwidth $\mathrm{BW} \! = \! 300\, \mathrm{MHz}$, a noise figure $\sigma^2_{\sf NF} \! = \! 5\,\mathrm{dB}$, a shadowing variance $\sigma^2_{\sf SD} \! = \! 7.82\,\mathrm{dB}$, a noise power spectral density of $-174\,\mathrm{dBm/Hz}$, and an angular spread $\Delta_{\sf sd} \! = \! \pi/6$.


The considered benchmarks are summarized as follows:
\begin{itemize}
    \item {\bf ZF-DPC}: The ZF-DPC scheme proposed in \cite{caire2003TIT:DPC}, which provides an information-theoretic upper bound on the DL sum SE of MU-MIMO systems under perfect CSIT.
    \item {\bf ZF-DPC-WF}: The ZF-DPC scheme with a water-filling power allocation method proposed in \cite{caire2003TIT:DPC}.
    \item {\bf GPIP}: The conventional GPIP algorithm in \cite{choi2019TWC:GPI}.
    \item {\bf WMMSE}: The well-known WMMSE algorithm for sum SE maximization proposed in \cite{christensen2008TWC:WMMSE}.
    \item {\bf WMMSE-SAA}: The robust WMMSE-based algorithm utilizing sample average approximation, as proposed in \cite{kim2014book:WMMSE-SAA, joudeh2016TCOM:WMMSE-SAA}.
    \item {\bf R-WMMSE}: The R-WMMSE algorithm from \cite{zhao2023TSP:R-WMMSE}, which achieves low-complexity under perfect CSIT by avoiding high-dimensional matrix inversion.
    \item {\bf G-R-WMMSE (sequential)} and {\bf G-R-WMMSE (parallel)}: The G-R-WMMSE algorithms proposed in \cite{yoo2024WCL:G-R-WMMSE}, which extend the R-WMMSE concept to cell-free networks under imperfect CSIT using sequential and parallel update strategies, respectively. For comparisons, these algorithms are tailored to the considered single-cell system.
    \item {\bf RZF} and {\bf MRT}: Conventional linear RZF and MRT precoding schemes, serving as low-complexity baselines.
\end{itemize}

\subsection{Perfect CSIT Case}
\label{subsec:performance_Single-Cell_perfect}
In this subsection, we provide simulation results verifying the sum SE performance and computational complexity of the proposed S-GPIP method under perfect CSIT.

\begin{figure}[t]
    \centering
    \includegraphics[width=0.9\linewidth]{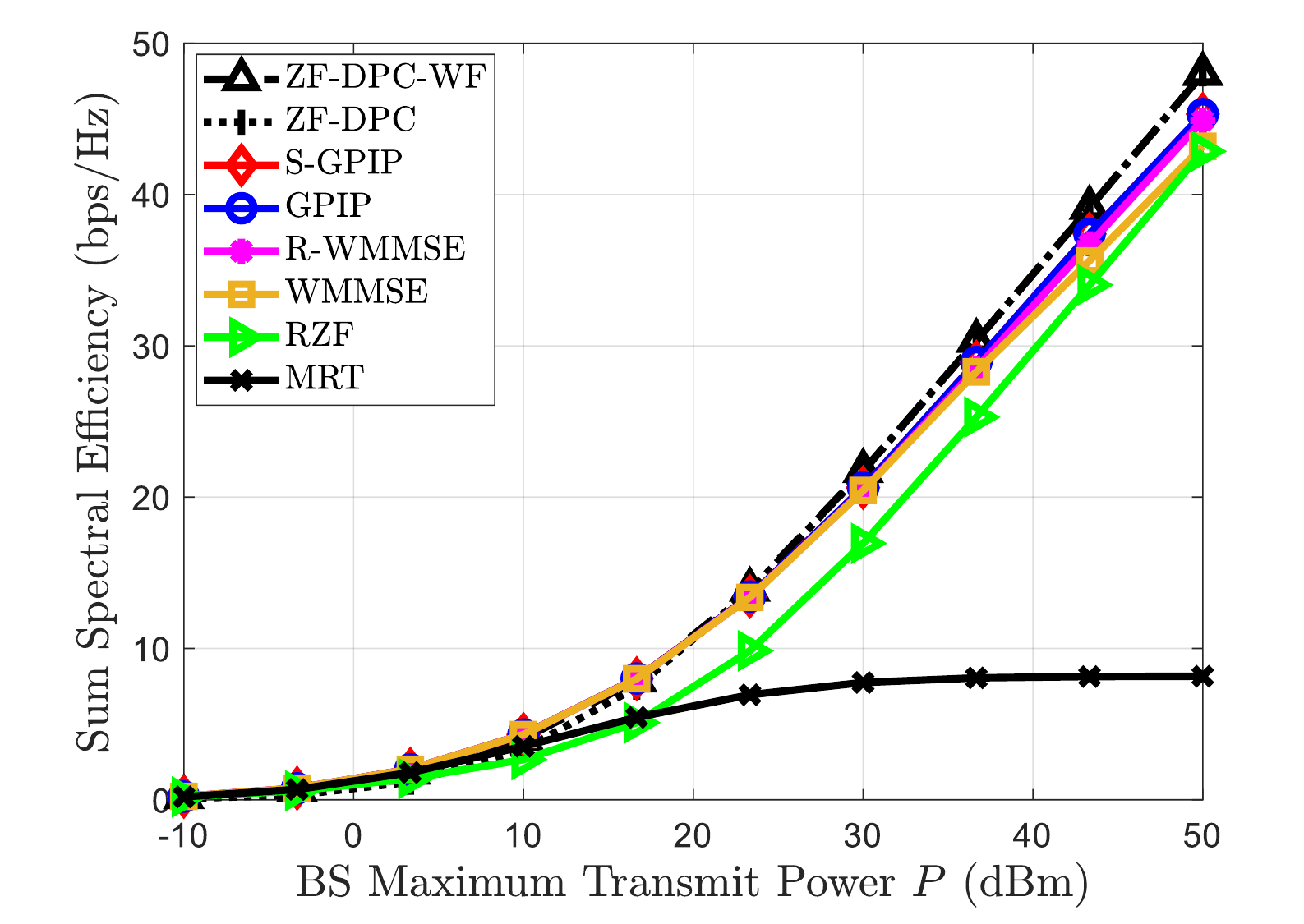}
    \caption{The sum SE versus the BS maximum transmit power budget $P$ for $N=32$ and $K=4$ under perfect CSIT.}
    \label{fig:SEvsPower_perfect}
\end{figure}
Fig.~\ref{fig:SEvsPower_perfect} illustrates the sum SE performance of various DL transmission strategies versus the BS maximum transmit power.
As shown in Fig.~\ref{fig:SEvsPower_perfect}, the proposed S-GPIP algorithm significantly outperforms conventional linear precoders (e.g., RZF and MRT) and closely tracks the performance of the conventional GPIP method.
In particular, the proposed method exhibits a slight performance gain over the WMMSE and R-WMMSE schemes in the high SNR regime $(P \! \geq \! 40\,\mathrm{dBm})$.
In the low SNR regime, the performance of S-GPIP approaches that of the ZF-DPC-WF benchmark.
However, this performance gap widens as the SNR increases, which is consistent with the findings reported in \cite{choi2019TWC:GPI}.

\begin{figure}[t]
    \centering
    \subfigure[]{
    \includegraphics[width=0.9\linewidth]{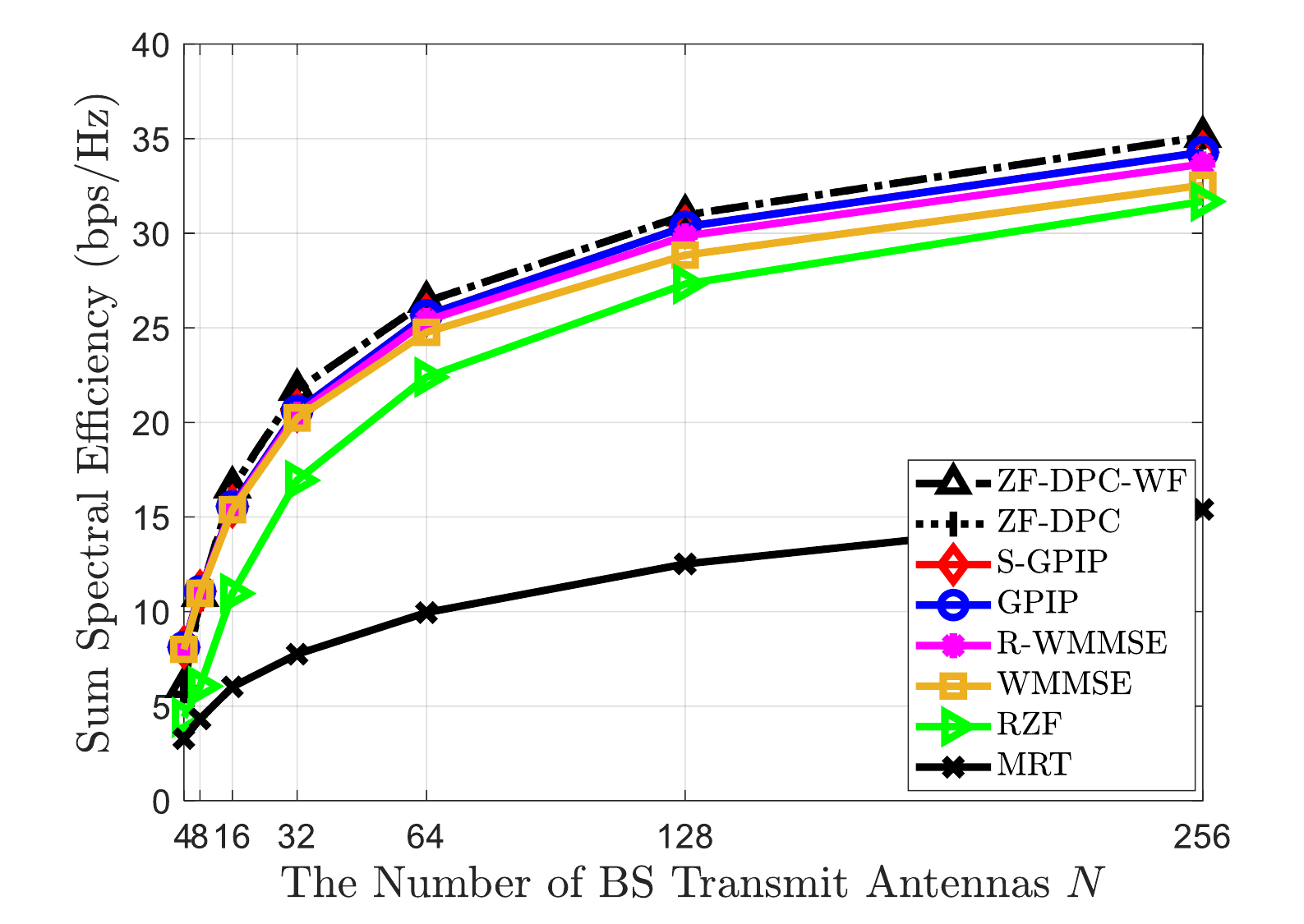}
    \label{fig:SEvsAntenna_perfect}
    }
    \subfigure[]{
    \includegraphics[width=0.9\linewidth]{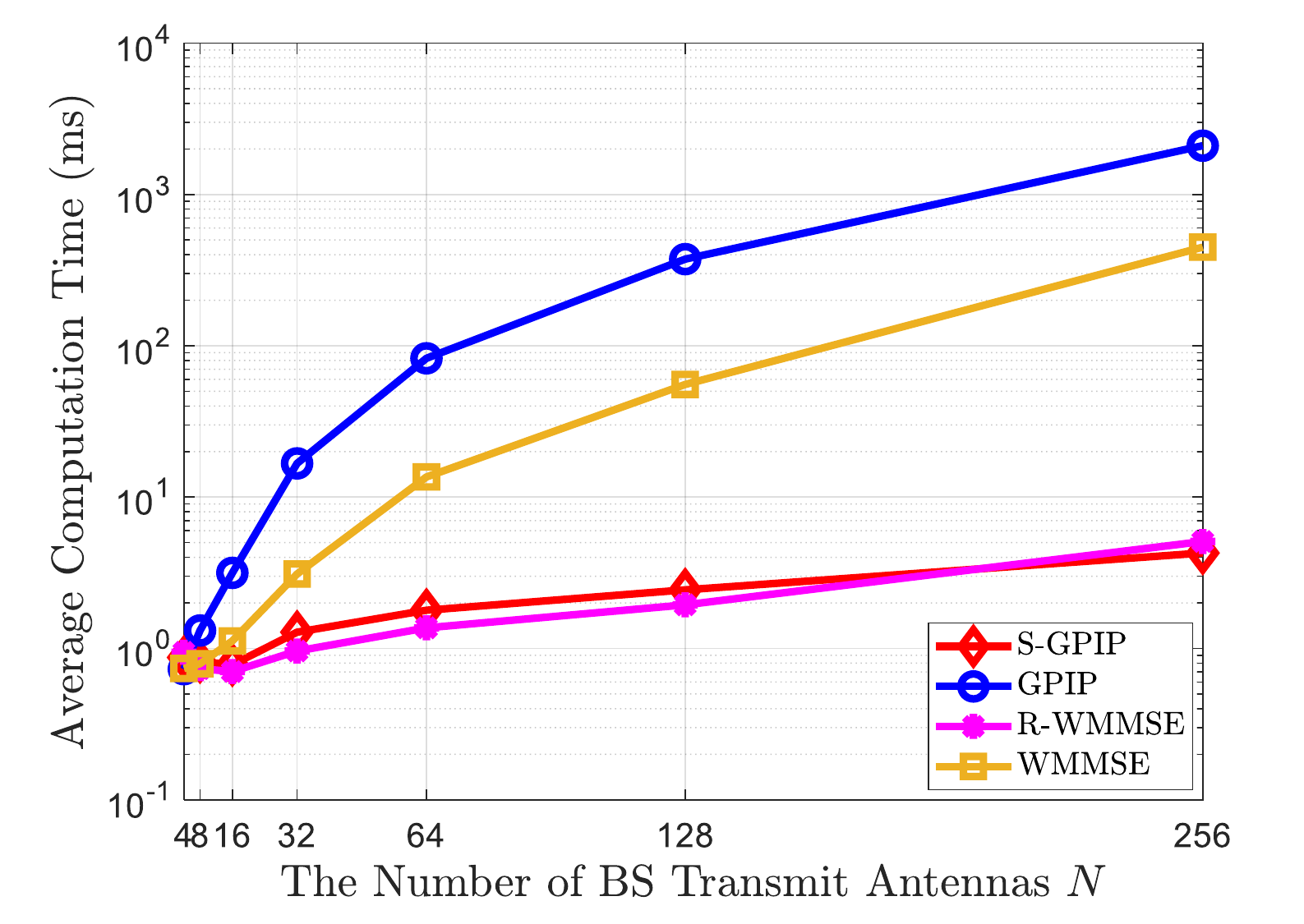}
    \label{fig:timevsAntenna_perfect}
    }
    \caption{(a) The sum SE and (b) computation time versus the BS transmit antennas $N$ for $K=4$ and $P=30\,\mathrm{dBm}$ under perfect CSIT.}
    \label{fig:antennas}
\end{figure}
Fig.~\ref{fig:antennas} shows the sum SE and average computation time versus the number of BS transmit antennas.
As shown in Fig.~\ref{fig:SEvsAntenna_perfect}, the proposed S-GPIP algorithm achieves the highest sum SE across the entire range of antennas numbers among all benchmarks, with the exception of the ideal ZF-DPC schemes.
Specifically, Fig.~\ref{fig:timevsAntenna_perfect} highlights the primary advantage of S-GPIP: its computation time exhibits a marginal increase with the number of antennas, whereas conventional GPIP and WMMSE scale rapidly with $N$.
This results in S-GPIP being up to 100 times faster at $N=256$, as its dominant complexity depends on the number of users $K$ rather than $N$.
Furthermore, while the S-GPIP algorithm demonstrates computation time comparable to the R-WMMSE algorithm around $N=200$, it surpasses the R-WMMSE in both sum SE and computational efficiency for larger arrays (e.g., $N=256$).
These results confirm that the proposed algorithm achieves sum SE performance comparable to state-of-the-art methods with a substantially reduced computational burden.

\subsection{Imperfect CSIT Case}
\label{subsec:performance_Single-Cell_imperfect}
In this subsection, we evaluate the robustness of the proposed method against channel estimation errors by assessing their achievable sum SE and computation time under the imperfect CSIT.
In the following figures, dashed curves denote benchmarks that do not utilize error covariance information.
This convention is maintained throughout the subsequent simulations unless specified otherwise.

\begin{figure}[t]
    \centering
    \includegraphics[width=0.9\linewidth]{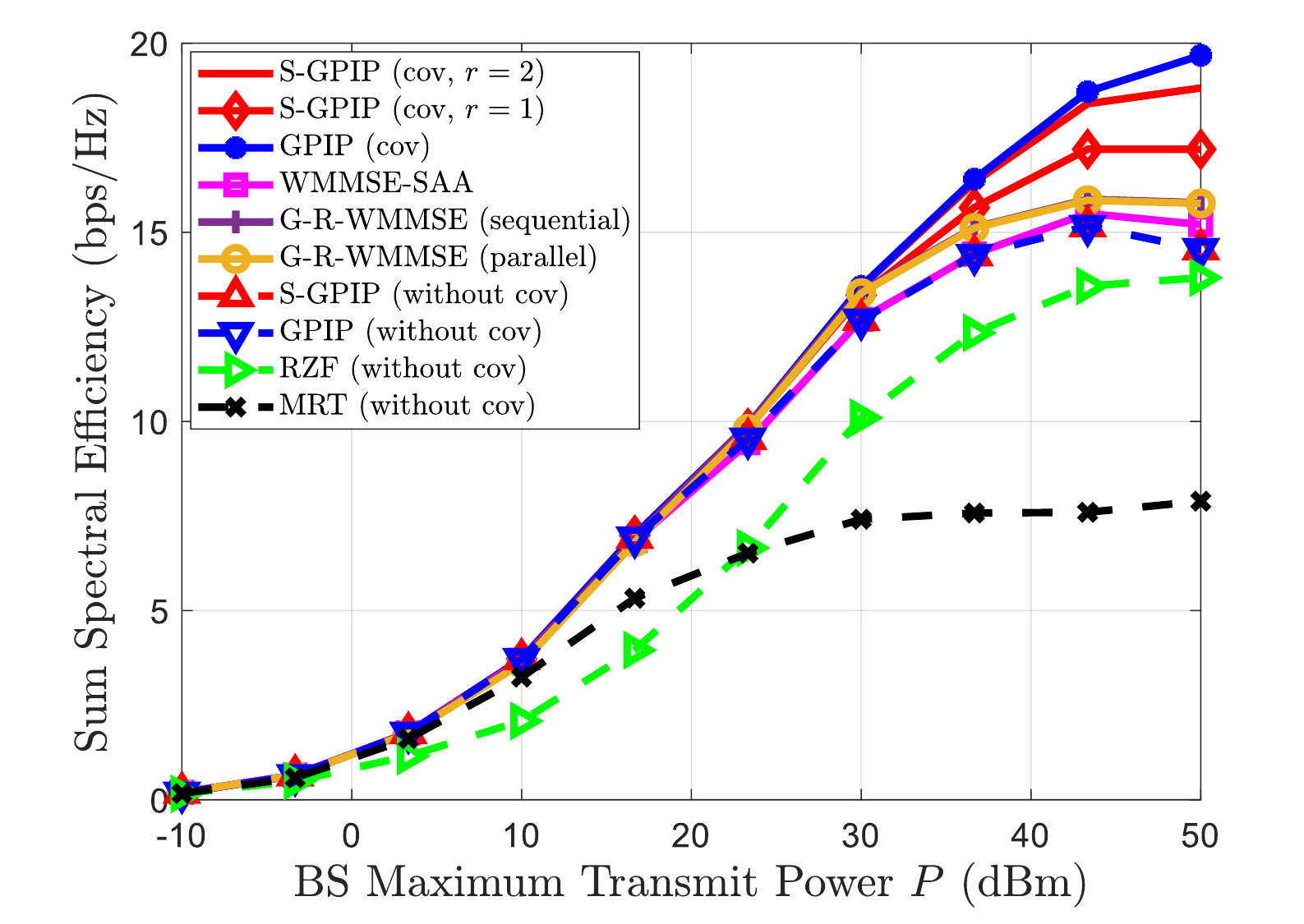}
    \caption{The sum SE versus the BS maximum transmit power budget $P$ for $N=32$, $K=4$, and $\kappa=0.3$ under imperfect CSIT.}
    \label{fig:SEvsPower_imperfect}
\end{figure}
Fig.~\ref{fig:SEvsPower_imperfect} illustrates the sum SE performance versus the BS maximum transmit power.
Specifically, we examine the proposed S-GPIP (cov) algorithms with rank parameters $r=1$ and $r=2$. 
To construct the precoder defined in \eqref{eq:reformulated_opt_precoding_vector_imperfect}, these variants correspond to utilizing the principal and the two dominant eigenpairs in \eqref{eq:power_iteration_imperfect}, respectively.
As depicted in Fig.~\ref{fig:SEvsPower_imperfect}, the proposed S-GPIP (cov) schemes outperform all existing benchmarks, strictly surpassed only by the conventional GPIP with error covariance (GPIP (cov)). 
Notably, the proposed S-GPIP (cov, $r=2$) algorithm achieves SE performance that nearly matches that of the GPIP (cov), even in the high SNR regime.
We recall that GPIP (cov) utilizes the full error covariance matrix without reducing its computational complexity, i.e., the complexity remains $\mathcal{O}(N^3K)$.
This demonstrates that the proposed algorithm attains sum SE performance comparable to the state-of-the-art GPIP (cov) under imperfect CSIT, while offering significantly reduced computational complexity.

\begin{figure}[t]
    \centering
    \subfigure[]{
    \includegraphics[width=0.9\linewidth]{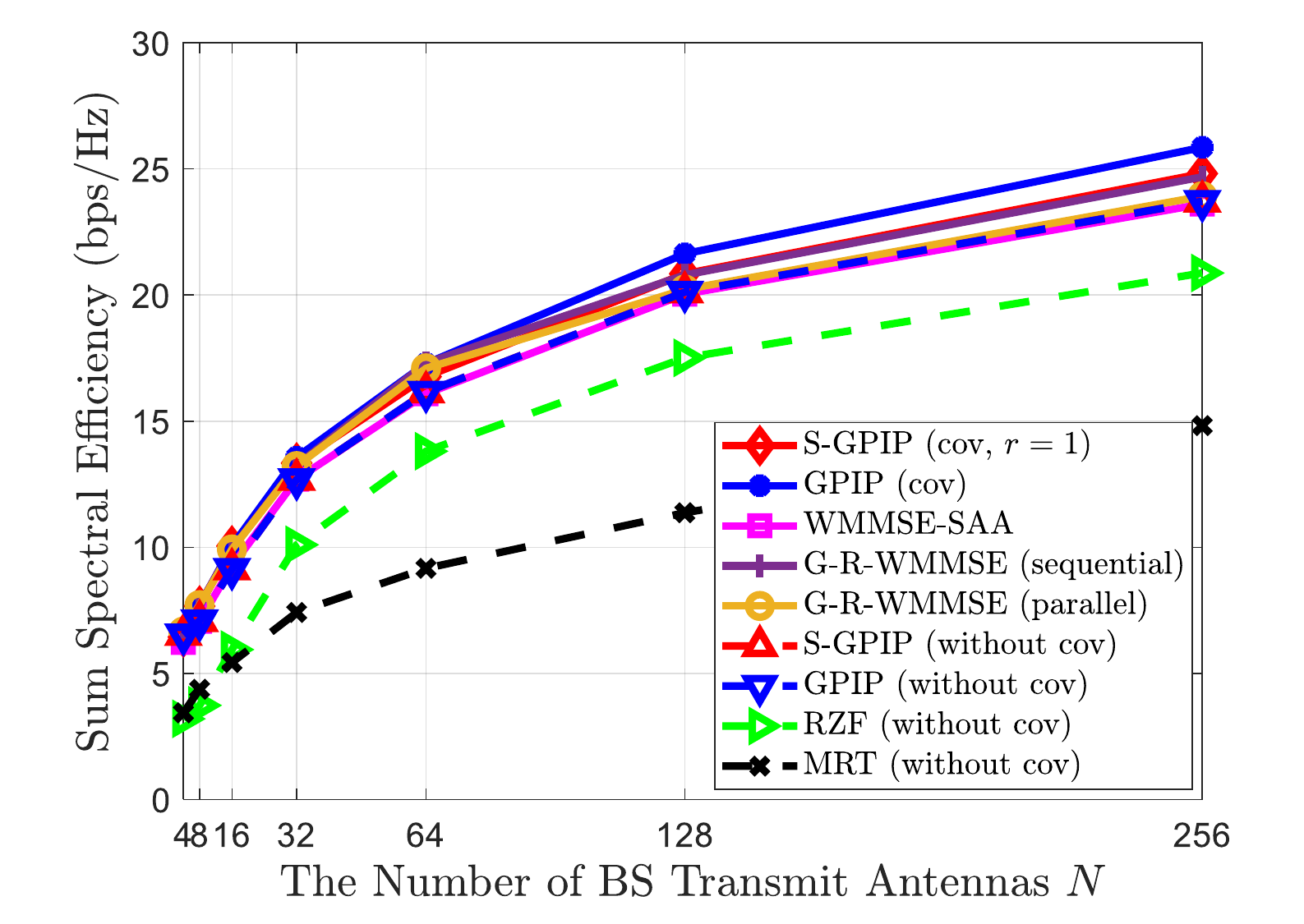}
    \label{fig:SEvsAntenna_imperfect}
    }
    \subfigure[]{
    \includegraphics[width=0.9\linewidth]{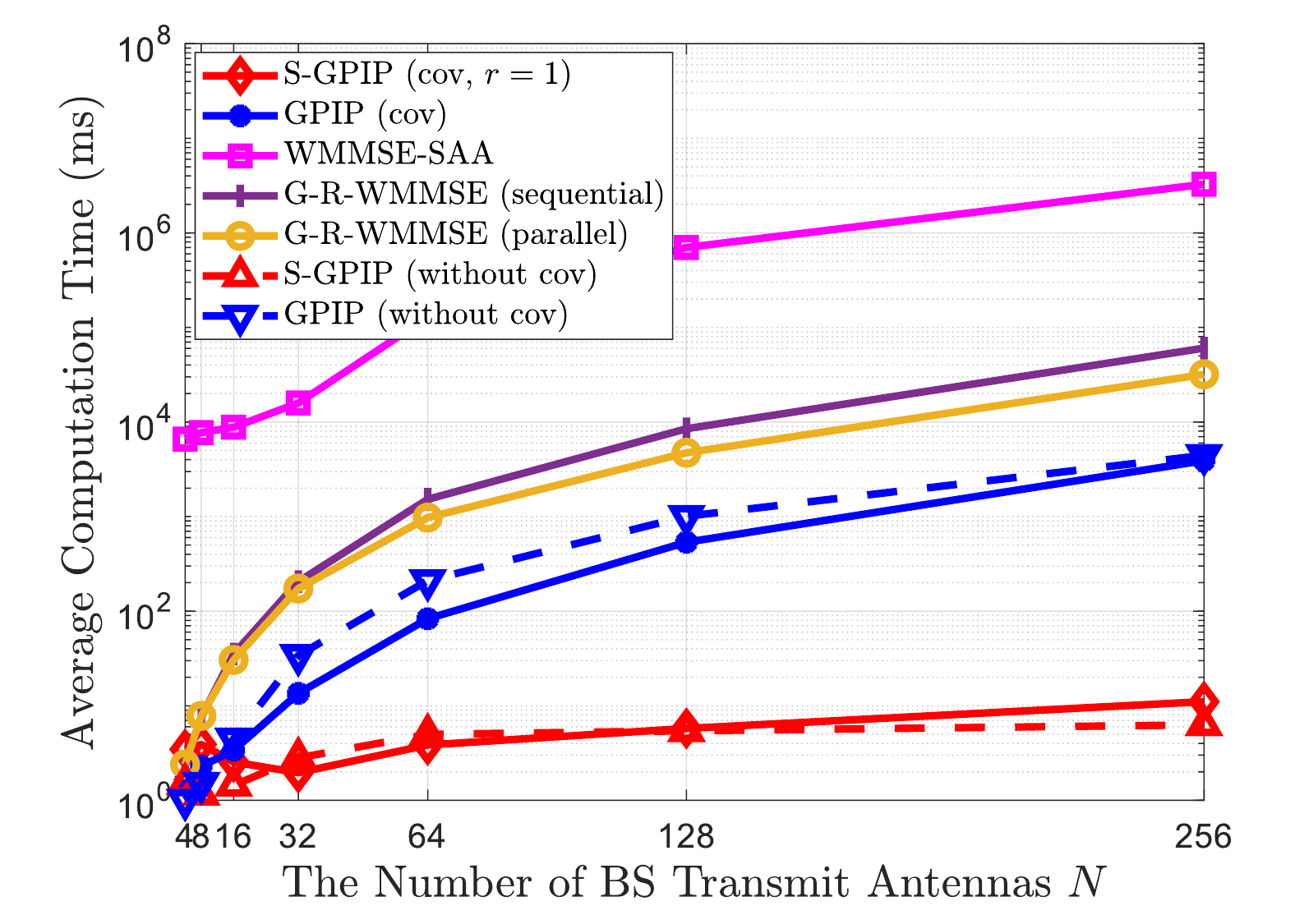}
    \label{fig:timevsAntenna_imperfect}
    }
    \caption{(a) The sum SE and (b) computation time versus the BS transmit antennas $N$ for $K=4$, $P=30\,\mathrm{dBm}$, and $\kappa=0.3$ under imperfect CSIT.}
    \label{fig:antennas_imp}
\end{figure}
Fig.~\ref{fig:antennas_imp} shows the SE performance and average computation time versus the number of BS transmit antennas.
As shown in Fig.~\ref{fig:SEvsAntenna_imperfect}, the proposed S-GPIP (cov) algorithm achieves sum SE performance that closely matches that of the GPIP (cov) method, while outperforming all other benchmarks.
Specifically, this high performance is maintained even by the S-GPIP (cov, $r=1$) variant, which utilizes only the principal eigenvector in \eqref{eq:power_iteration_imperfect}.
This demonstrates the algorithm's robustness against channel estimation uncertainty, consistent with the observations in Fig.~\ref{fig:SEvsPower_imperfect}.
Regarding the average computation time shown in Fig.~\ref{fig:timevsAntenna_imperfect}, while the computational load of the all benchmarks scales rapidly with $N$, the complexity of the proposed S-GPIP (cov) algorithm remains nearly constant, exhibiting only marginal growth even for large-scale antenna arrays.
Consequently, the proposed scheme achieves an exceptional balance between high sum SE performance and low computational complexity, highlighting its scalability and practicality for massive MIMO deployments.

\begin{figure}[t]
    \centering
    \subfigure[]{
    \includegraphics[width=0.9\linewidth]{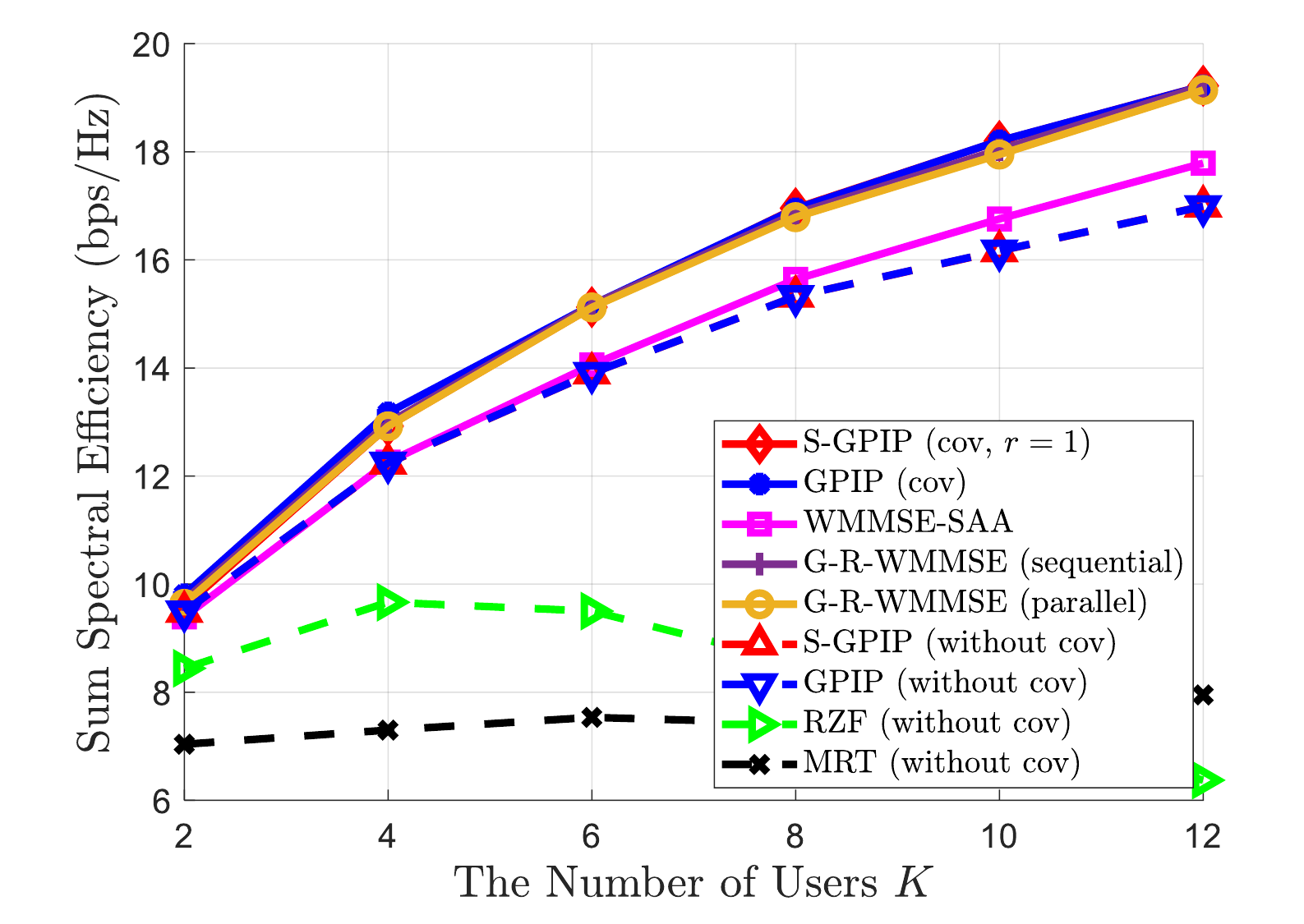}
    \label{fig:SEvsUser_imperfect}
    }
    \subfigure[]{
    \includegraphics[width=0.9\linewidth]{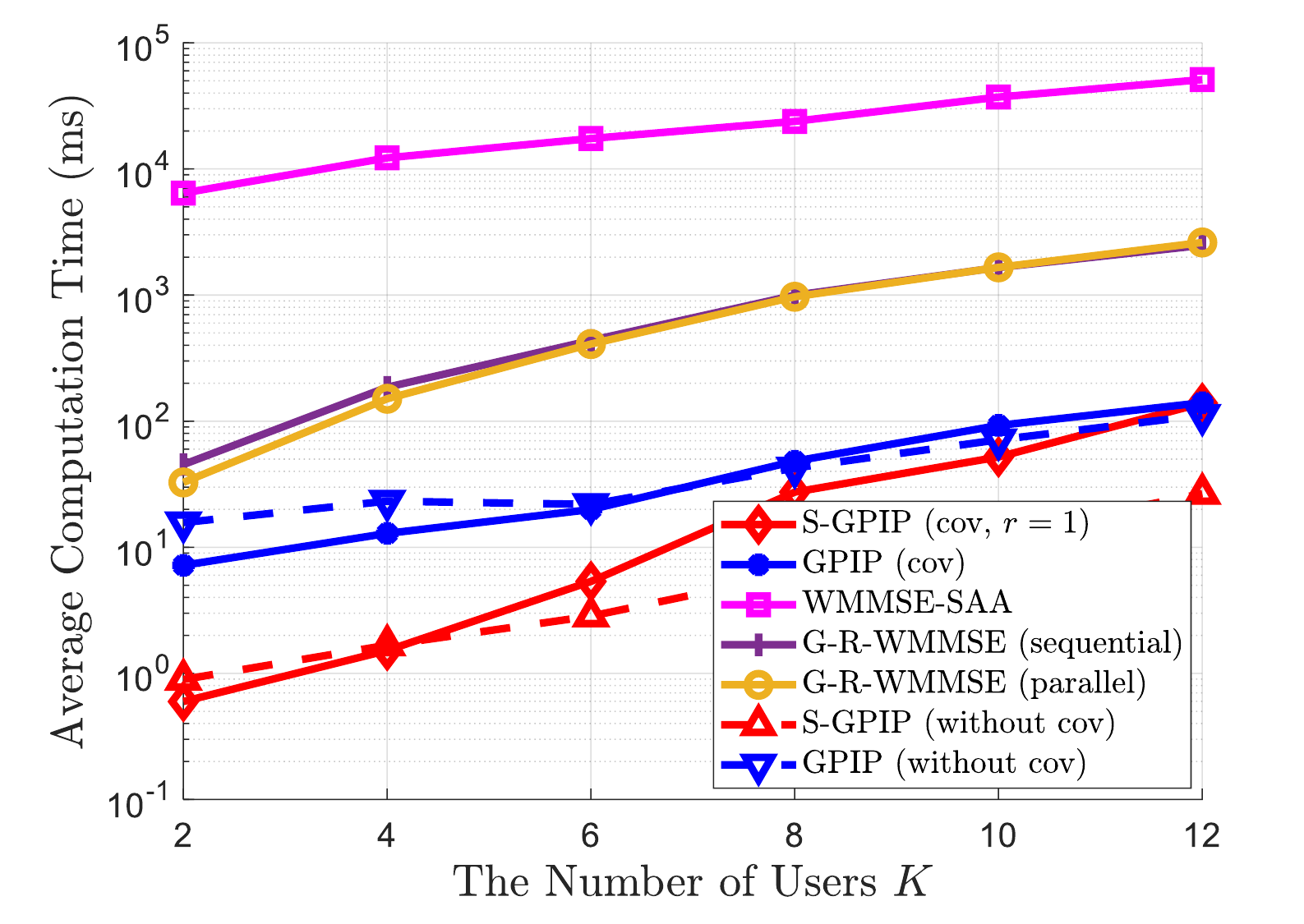}
    \label{fig:timevsUser_imperfect}
    }
    \caption{(a) The sum SE and (b) computation time versus DL users $K$ for $N=32$, $P=30\,\mathrm{dBm}$, and $\kappa=0.3$ under imperfect CSIT.}
    \label{fig:users}
\end{figure}
Fig.~\ref{fig:users} presents the sum SE and average computation time versus the number of DL users $K$. 
As depicted in Fig.~\ref{fig:SEvsUser_imperfect}, the proposed S-GPIP (cov) algorithm closely tracks the performance of the GPIP (cov) across the entire range of user counts. 
However, Fig.~\ref{fig:timevsUser_imperfect} reveals that the computation time of the proposed S-GPIP (cov) algorithm scales significantly with $K$. 
This observation aligns with the theoretical complexity analysis in Table~\ref{Table1}, as the dominant matrix inversion $\bar{\bB}_{\sf KKT}^{-1} ( \bar{\bv}^{(t-1)} )$ in Algorithm~\ref{alg:two} imposes a complexity of $\CMcal{O}(K^4)$. 
Nevertheless, in typical massive MIMO scenarios where $N \gg K$, the proposed scheme maintains a substantial computational advantage over the baselines, whose complexity is dominated by $N$.

\subsection{Convergence}

\begin{figure}[t]
    \centering
    \subfigure[]{
    \includegraphics[width=0.9\linewidth]{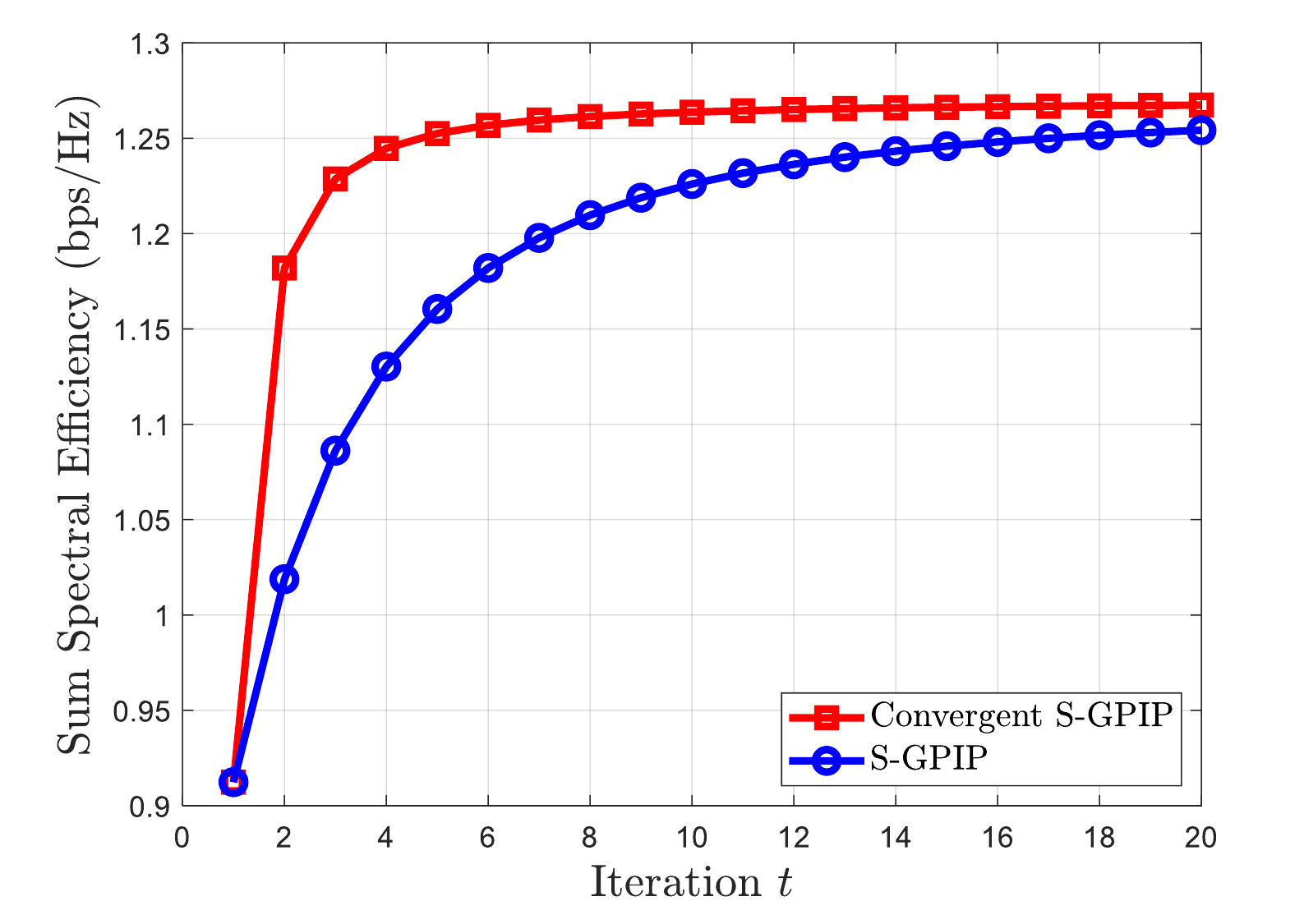}
    \label{fig:conv_P0dBm_perfect}
    }
    \subfigure[]{
    \includegraphics[width=0.9\linewidth]{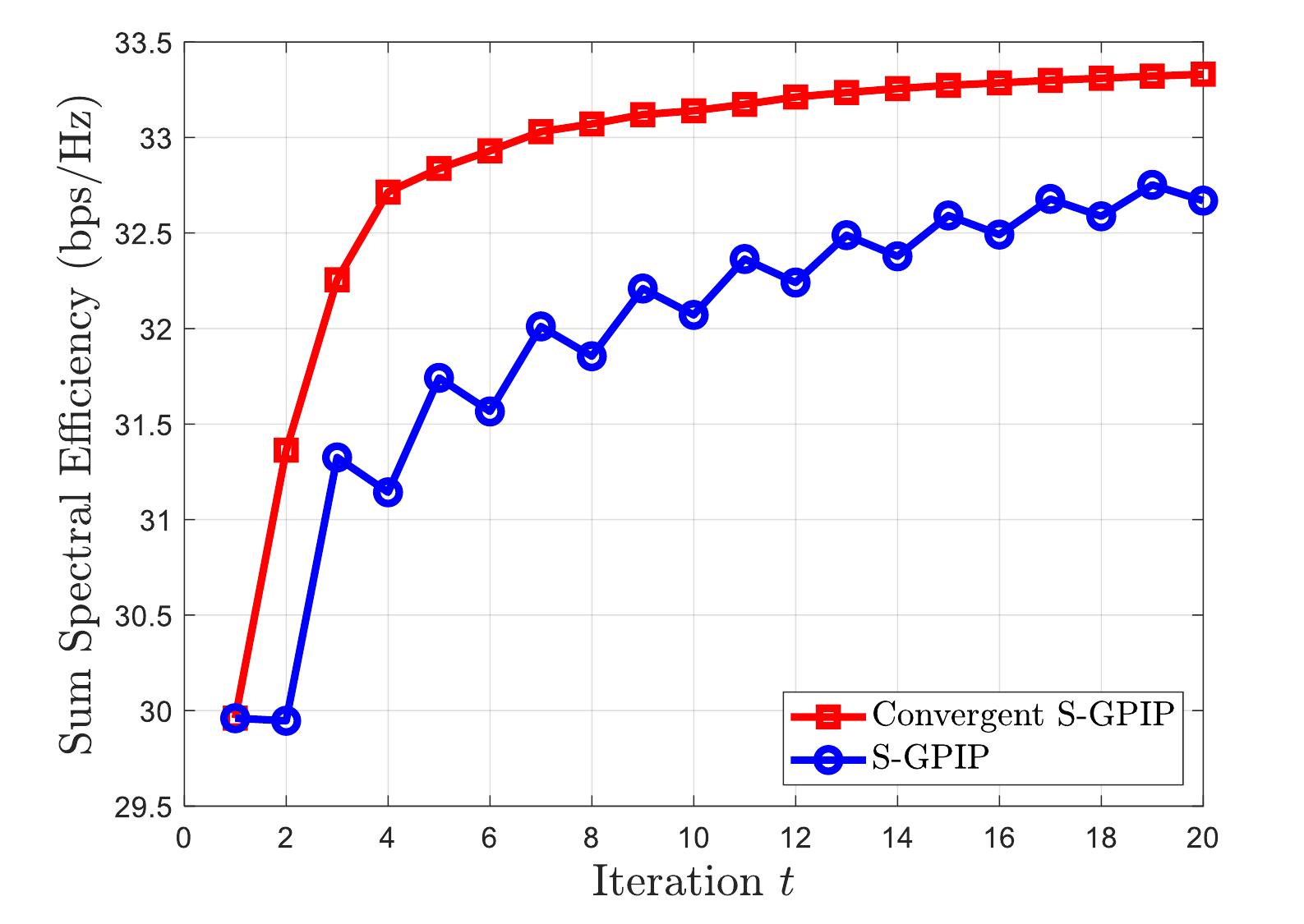}
    \label{fig:conv_P40dBm_perfect}
    }
    \caption{The convergence over iterations $t$ for (a) $P=0\,\mathrm{dBm}$ and (b) $P=40\,\mathrm{dBm}$, with $N=32$ and $K=4$ under perfect CSIT.}
    \label{fig:convergence}
\end{figure}

Fig.~\ref{fig:convergence} illustrates the sum SE convergence of the proposed S-GPIP algorithm and convergent GPIP algorithm over iterations $t$ for perfect CSIT.
We compare the S-GPIP ($\eta=1/2$) with the convergent S-GPIP algorithm which dynamically adjusts $\eta$ via backtracking line search to satisfy the sufficient ascent condition in \eqref{eq:eta_condition}.
As shown in Fig.~\ref{fig:conv_P0dBm_perfect}, in the low SNR regime, both methods reveal stable convergence to a local stationary point, corroborating the theoretical analysis in Section~\ref{sec:conv_pg}.
In particular, it is shown that the convergent S-GPIP algorithm converges faster in the considered system.
This improvement stems from the dynamic adaptation of the step size $\eta$, which allows the algorithm to take fine steps along the preconditioned gradient direction.

Fig.~\ref{fig:conv_P40dBm_perfect} depicts the convergence in the high SNR regime for perfect CSIT.
Unlike the low SNR regime, the S-GPIP exhibits oscillatory convergence.
This is because, in this high SNR regime, the objective function becomes highly non-linear and steep, resulting in a large Lipschitz constant $L_{g}$.
Consequently, the fixed $\eta = 1/2$ frequently violates the theoretical upper bound in \eqref{eq:eta_condition}, causing the update to overshoot.
In contrast, the convergent S-GPIP adaptively shrinks $\eta$ to enforce the sufficient ascent condition, ensuring a smooth, monotonic ascent and even faster convergence to a superior stationary point.
Overall, it is demonstrated that the convergent GPIP can guarantee the convergence with faster convergence.

\section{Conclusion}
\label{sec:conclusion}
In this paper, we developed a scalable and computationally efficient precoding framework for massive MIMO systems under both perfect and imperfect CSIT. 
By leveraging the low-dimensional subspace property of optimal precoders, we reformulated the high-dimensional beamforming problem into a lower-dimensional weight optimization whose dimension scales with the number of users rather than antennas. 
This reformulation enables the GPIP framework to overcome the conventional cubic complexity bottleneck associated with large antenna arrays.
We further extended this subspace-based principle to the imperfect CSIT setting by showing that stationary solutions lie in a combined subspace spanned by the estimated channel and error covariance matrices. 
This result provides a structured approach to robust precoding and motivates a low-rank approximation that preserves scalability while maintaining performance.
In addition, by exploiting the rank-one update structure via the Sherman–Morrison formula, we further reduced the matrix inversion complexity. 
Interpreting the GPIP update as a PPGA method, we also established convergence guarantees and developed a provably convergent scalable algorithm.
Numerical results demonstrated that the proposed methods achieve the highest SE performance compared to state-of-the-art linear precoders while substantially reducing computational complexity and ensuring stable convergence in the large antenna regime, highlighting their practicality for large-scale massive MIMO systems.
\appendices
\section{Proof of Lemma \ref{Lemma1}}
\label{app:Lemma1}

We first denote the objective function as 
\begin{align}
    L(\bar{\bw}) = \mathrm{log}_2 \prod_{k=1}^{K} \left(\frac{\bar{\bw}^{\sf H} \bA_{k} \bar{\bw}}{\bar{\bw}^{\sf H} \bB_{k} \bar{\bw}} \right).\label{eq:Lagrangian}
\end{align}
We need to compute the derivative of \eqref{eq:Lagrangian} to obtain the first-order optimality condition.
Let us define $\lambda(\bar{\bw})$ as
\begin{align}
    \lambda(\bar{\bw}) = \prod_{k=1}^{K} \left( \frac{\bar{\bw}^{\sf H}\bA_{k}\bar{\bw}}{\bar{\bw}^{\sf H} \bB_{k} \bar{\bw}}\right). 
\end{align}
Thus, we have $L(\bar{\bw})= \mathrm{log_2} \, \lambda(\bar{\bw})$.
Subsequently, we compute a derivative of $L(\bar{\bw})$ as
\begin{align}
    \frac{\partial L(\bar{\bw})}{\partial \bar{\bw}^{\sf H}} = \frac{1}{\lambda(\bar{\bw})\mathrm{ln}\,2} \frac{\partial \lambda(\bar{\bw})}{ \partial \bar{\bw}^{\sf H}}. \label{eq:derivative_Lagrangian}
\end{align}
Hence, the derivative of $\lambda(\bar {\bw})$ is computed as
\begin{align}
    \label{eq:dlambda}
    \frac{\partial \lambda(\bar{\bw})}{ \partial \bar{\bw}^{\sf H}} = 2 \lambda(\bar{\bw}) \left(\sum_{k=1}^{K} \left( \frac{\bA_{k}\bar{\bw}}{\bar{\bw}^{\sf H}\bA_{k}\bar{\bw}} - \frac{\bB_{k}\bar{\bw}}{\bar{\bw}^{\sf H}\bB_{k}\bar{\bw}} \right)\right).
\end{align}
Setting \eqref{eq:dlambda} equal to zero with reorganization provides the stationary condition as
\begin{align}
    \label{eq:KKT_cond}
    \lambda_{\sf num}(\bar{\bw})\left(\sum_{k=1}^{K} \frac{\bA_{k}}{\bar{\bw}^{\sf H}\bA_{k}\bar{\bw}}\right)\bar{\bw} = \lambda(\bar{\bw})\lambda_{\sf den}(\bar{\bw})\left(\sum_{k=1}^{K}\frac{\bB_{k}}{\bar{\bw}^{\sf H}\bB_{k}\bar{\bw}}\right)\bar{\bw}.
\end{align}
Accordingly, we derive the necessary condition of the first-order optimality condition from \eqref{eq:KKT_cond} as
\begin{align}
    \bA_{\sf KKT}(\bar{\bw})\bar{\bw} = \lambda(\bar{\bw})\bB_{\sf KKT}(\bar{\bw})\bar{\bw}, \label{eq:necessary_condition}
\end{align}
where $\bA_{\sf KKT}(\bar{\bw})$ and $\bB_{\sf KKT}(\bar{\bw})$ are defined in \eqref{eq:A_KKT} and \eqref{eq:B_KKT}, respectively. 
Since $\bB_{\sf KKT}(\bar{\bw})$ is Hermitian and $N>K$, it is expected to be non-singular. 
This completes the proof.
\qed

\section{Proof of Proposition~\ref{proposition2}}
\label{app:impCSIT}
We prove Proposition \ref{proposition2} by analyzing the KKT conditions of the problem \eqref{avg_problem}.
Specifically, we establish that the Lagrange multiplier $\lambda$ must be strictly positive by using a proof by contradiction.
For any nontrivial stationary point $\bff_{k}$ and Lagrange multiplier $\lambda$, the following KKT conditions hold:
\begin{subequations}
    \label{app:KKTconditions}
    \begin{align}
        &\frac{\partial}{\partial\bff^{\sf H}_{k}} \left[\sum_{k=1}^{K} \bar{R}^{\sf lb}_{k}(\bF) - \lambda (\mathrm{tr}(\bF \bF^{\sf H})-1)\right] = 0, \label{stationarity}\\
        &\lambda(\mathrm{tr}(\bF \bF^{\sf H})-1) = 0,\\
        &\mathrm{tr}(\bF \bF^{\sf H}) \leq 1,\\
        &\lambda \geq 0.
    \end{align}
\end{subequations}
We calculate the stationarity condition in \eqref{stationarity} as the derivative of $\sum_{k=1}^{K} \bar{R}^{\sf lb}_{k}(\bF)$ with respect to $\bff^{\sf H}_{k}$. To this end, we first obtain the derivative of $\bar{R}^{\sf lb}_{k}(\bF)$ with respect to $\bff^{\sf H}_{k}$ as
\begin{align}
    \frac{\partial}{\partial\bff^{\sf H}_{k}} \bar{R}^{\sf lb}_{k}(\bF) =A^{-1}_{k}(\bF)(\hat{\bh}_{k} \hat{\bh}^{\sf H}_{k} \bff_{k} + \boldsymbol{\Phi}_{k}\bff_{k}) - B^{-1}_{k}(\bF)\boldsymbol{\Phi}_{k}\bff_{k},
\end{align}
where
$A^{-1}_{k}(\bF) = \left( \sum_{i=1}^{K}|\hat{\bh}^{\sf H}_{k} \bff_{i}|^2 + \sum_{i=1}^{K}\bff_{i}^{\sf H} \boldsymbol{\Phi}_{k}\bff_{i} + \frac{\sigma^{2}}{P} \right)^{-1}$ and $B^{-1}_{k}(\bF) = \left( \sum_{i=1,i\ne k}^{K}|\hat{\bh}^{\sf H}_{k} \bff_{i}|^2 + \sum_{i=1}^{K}\bff_{i}^{\sf H} \boldsymbol{\Phi}_{k}\bff_{i} + \frac{\sigma^{2}}{P} \right)^{-1}$.
Note that the constant factor $1/\ln 2$ arising from the derivative of $\log_2(\cdot)$ is absorbed into the Lagrange multiplier $\lambda$ for simplicity.
We then calculate the derivative of $\sum_{j=1,j \ne k}^{K}\bar{R}^{\sf lb}_{j}(\bF)$ with respect to $\bff^{\sf H}_{k}$, i.e.,
\begin{align}
    \sum_{j=1,j \ne k}^{K}\frac{\partial}{\partial\bff^{\sf H}_{k}} \bar{R}^{\sf lb}_{j}(\bF) =& \! \sum_{j=1,j \ne k}^{K} \! A^{-1}_{j}(\bF) (\hat{\bh}_{j} \hat{\bh}^{\sf H}_{j} \bff_{k} + \boldsymbol{\Phi}_{j}\bff_{k}) \nonumber \\
    &- \! \sum_{j=1,j \ne k}^{K} \! B^{-1}_{j}(\bF) (\hat{\bh}_{j} \hat{\bh}^{\sf H}_{j} \bff_{k} + \boldsymbol{\Phi}_{j}\bff_{k}).
\end{align}
To proceed with the contradiction, we assume the contrary that $\lambda \! = \! 0$.
Consequently, the following stationarity condition holds:
\begin{align}
    \label{app:stationarity}
    \frac{\partial}{\partial\bff^{\sf H}_{k}} \bar{R}^{\sf lb}_{k}(\bF) + \sum_{j=1,j\ne k}^{K} \frac{\partial}{\partial\bff^{\sf H}_{k}} \bar{R}^{\sf lb}_{j}(\bF) = 0.
\end{align}
By multiplying $\bff_{k}^{\sf H}$ on the left of the condition in \eqref{app:stationarity}, and summing over $k=1,...,K$ and rearranging the terms, we represent the following equation as
\begin{align}
    \label{app:proof_imp}
    &\sum_{k=1}^{K} \left( A^{-1}_{k}(\bF) \! \cdot \! \sum_{j=1}^{K} \left( |\hat{\bh}_{k}^{\sf H} \bff_{j}|^{2} + \bff_{j}^{\sf H}\boldsymbol{\Phi}_{k}\bff_{j} \right) \! \right) = \nonumber\\
    &\sum_{k=1}^{K} \left( B^{-1}_{k}(\bF) \! \cdot \! \left( \sum_{j=1, j\ne k}^{K}|\hat{\bh}_{k}^{\sf H} \bff_{j}|^{2} + \sum_{j=1}^{K}\bff_{j}^{\sf H}\boldsymbol{\Phi}_{k}\bff_{j} \right) \right).
\end{align}
Using $v(v+c)^{-1} = 1 - c(v+c)^{-1}$, where $v$ and $c$ are a scalar variable and a constant value, respectively, we can represent the equation in \eqref{app:proof_imp} as
\begin{align}
    \label{app:proof_imp2}
    &\sum_{k=1}^{K} \left(\sum_{i=1}^{K}|\hat{\bh}^{\sf H}_{k} \bff_{i}|^2 + \sum_{i=1}^{K}\bff_{i}^{\sf H} \boldsymbol{\Phi}_{k}\bff_{i} + \frac{\sigma^{2}}{P}\right)^{-1} = \nonumber \\
    &\sum_{k=1}^{K} \left(\sum_{i=1, i\ne k}^{K}|\hat{\bh}^{\sf H}_{k} \bff_{i}|^2 + \sum_{i=1}^{K}\bff_{i}^{\sf H} \boldsymbol{\Phi}_{k}\bff_{i} + \frac{\sigma^{2}}{P} \right)^{-1}.
\end{align}
Note that the equation in \eqref{app:proof_imp2} holds if and only if $\hat{\bh}^{\sf H}_{k} \bff_{k} \! = \! 0$ for all $k$.
This condition implies zero desired signal power for all users, yielding a zero sum SE.
This trivial solution contradicts the premise that $\bff_{k}$ is a nontrivial stationary point. 
Thus, the optimal Lagrange multiplier must satisfy $\lambda^{\star} > 0$.
Consequently, we define the optimal precoding vector $\bff^{\star}_{k}$ as
\begin{align}
    \bff^{\star}_{k} =& \frac{1}{\lambda^{\star}} \Bigg( \sum_{j=1}^{K} A^{-1}_{j}(\bF) \hat{\bh}_{j} \hat{\bh}^{\sf H}_{j} \bff_{k} - \sum_{j=1,j \ne k}^{K} B^{-1}_{j}(\bF) \hat{\bh}_{j} \hat{\bh}^{\sf H}_{j} \bff_{k} = \nonumber \\
    &+ \sum_{j=1}^{K} \left( A^{-1}_{j}(\bF) - B^{-1}_{j}(\bF) \right) \boldsymbol{\Phi}_{j}\bff_{k}\Bigg).
\end{align}
It implies that any nontrivial stationary point $\bff_{k}$ of the problem in \eqref{avg_problem} must lie in the range space of the estimated channel $\hat{\bH}$ and all error covariance matrices for users ${{\boldsymbol{\Phi}}} = \left[{\boldsymbol{\Phi}}_{1},...,{\boldsymbol{\Phi}}_{K} \right]$, and the corresponding Lagrange multiplier $\lambda$ is positive.
This completes the proof.
\qed

\bibliographystyle{IEEEtran}
\bibliography{bibtex}
\end{document}